\newtheorem{theorem}{Theorem}
\newtheorem{definition}{Definition}
\newtheorem{lemma}{Lemma}
\newcommand\RedeclareMathOperator{%
  \@ifstar{\def\rmo@s{m}\rmo@redeclare}{\def\rmo@s{o}\rmo@redeclare}%
}
\newcommand\rmo@redeclare[2]{%
  \begingroup \escapechar\m@ne\xdef\@gtempa{{\string#1}}\endgroup
  \expandafter\@ifundefined\@gtempa
     {\@latex@error{\noexpand#1undefined}\@ehc}%
     \relax
  \expandafter\rmo@declmathop\rmo@s{#1}{#2}}
\newcommand\rmo@declmathop[3]{%
  \DeclareRobustCommand{#2}{\qopname\newmcodes@#1{#3}}%
}
\DeclareMathOperator{\cgeq}{\succcurlyeq}
\DeclareMathOperator{\cgt}{\succ}
\DeclareMathOperator{\clt}{\prec}
\DeclareMathOperator{\R}{\mathbf{R}}
\RedeclareMathOperator{\S}{\mathbf{S}}
\DeclareMathOperator{\E}{\mathbf{E}}
\begin{document}
\title{Choquet integral in decision analysis – lessons from the axiomatization.}
\author{Mikhail Timonin}
\maketitle

\begin{abstract}
In \citep{timonin2016axiomatization} we developed a general axiomatic treatment of a popular multicriteria decision model - the Choquet
integral. This paper contains extensions of our results to the particular interesting special cases of the Choquet integral, analysis of some
aspects of the Choquet integral model learning, and a discussion of the applications of our results in decision theory.  
\end{abstract}




\section{Introduction}
\label{sec:c4-introduction}

In \citep{timonin2016axiomatization} we developed a general axiomatic treatment of a popular multicriteria decision model - the Choquet
integral. This paper contains extensions of our results to the particular interesting special cases of the Choquet integral, analysis of
some aspects of the Choquet integral model learning, and a discussion of the applications of our results in decision theory.  The Choquet
integral is a powerful aggregation operator which lists many well-known models as its special cases. In this paper we look at these
special cases and provide their axiomatic analysis. In cases where an axiomatization has been previously given in the literature, we connect
the existing results with the framework that we have developed.

Next we turn to the question of learning, which is especially important for the practical applications of the model. So far, learning of the
Choquet integral has been mostly confined to the learning of the capacity. Such an approach requires making a powerful assumption that all
dimensions (e.g. criteria) are evaluated on the same scale, which is rarely justified in practice. Too often  categorical data is
 given arbitrary numerical labels (e.g. AHP), and numerical data is considered cardinally and ordinally commensurate, sometimes after a
simple normalization. Such approaches clearly lack  scientific rigour, and yet they are commonly seen in all kinds of applications. We
discuss the pros and cons of making such an assumption and look at the consequences which our uniqueness results have for the learning
problems.

Finally, we revisit some of the applications we discussed in the Introduction. Apart from MCDA, which is the main area of interest for our
results, we also discuss how the model can be interpreted in the social choice context. We look in detail at the state-dependent utility,
and show how comonotonicity, central to the previous axiomatizations, actually implies state-independency in the Choquet integral model. We
also discuss the conditions required to have a meaningful state-dependent utility representation and show the novelty of our results
compared to the previous methods of building state-dependent models.


\section{Extensions}
\label{sec:spec-cases-choq}

\subsection{Ordinal models}
\label{sec:ordinal-models}

Notable ordinal special cases of the Choquet integral are:
\begin{itemize}
\item Min/Max
\item Order statistic ($k$-smallest element) $OS_k$
\item Lattice polynomial $p^{{\cal A}{\cal B}}$.
\end{itemize}
Moreover, Min/Max are special cases of $OS_k$ ($k=1$ and $k=n$ correspondingly), and $OS_k$ is a special case of the lattice polynomial model, as becomes
evident from the following definitions.
\begin{definition}
  $\cgeq$ can be represented by \emph{MIN}, if exist value functions $\phi_i:X_i \rightarrow \mathbb{R}$ such that for all $x,y
  \in X$ we have
  \begin{equation}
    x \cgeq y \iff \bigwedge_{i \in N}\phi_i(x_i) \geq \bigwedge_{i \in N}\phi_i(y_i),
  \end{equation}
where $\bigwedge$ means minimum.
\end{definition}

\begin{definition}
  $\cgeq$ can be represented by \emph{MAX}, if exist value functions $\phi_i:X_i \rightarrow \mathbb{R}$ such that for all $x,y
  \in X$ we have
  \begin{equation}
    x \cgeq y \iff \bigvee_{i \in N}\phi_i(x_i) \geq \bigvee_{i \in N}\phi_i(y_i),
  \end{equation}
where $\bigvee$ means maximum.
\end{definition}

\begin{definition}
  $\cgeq$ can be represented by an \emph{order statistic} $OS_k$, if exist value functions $\phi_i:X_i \rightarrow \mathbb{R}$ such that for all $x,y
  \in X$ we have
  \begin{equation}
    x \cgeq y \iff \phi_{(k)}(x_{(k)}) \geq \phi_{(k)}(y_{(k)}),
  \end{equation}
where $\phi_{(k)}(z_{(k)})$ stands for $k$th smallest element of $(\phi_1(z_1),\ldots,\phi_n(z_n))$.
\end{definition}
An order statistic can be written in a CNF and DNF-like\footnote{Conjunctive normal form and disjunctive normal form.} forms
(e.g. \citealp{ovchinnikov1996means}):
\begin{equation}
  OS_k = \bigwedge_{K \subset N \atop \vert K \vert =k}\bigvee_{i \in K} \phi_i(x_i) = \bigvee_{K \subset N \atop \vert K \vert = n - k + 1}\bigwedge_{i \in K} \phi_i(x_i).
\end{equation}
Obviously, MIN and MAX are particular cases of $OS_k$ with $k=1$ and $k=n$ correspondingly.

\begin{definition}
  $\cgeq$ can be represented by a \emph{lattice polynomial} $p^{{\cal A}{\cal B}}$, if exist value functions
  $\phi_i:X_i \rightarrow \mathbb{R}$ such that for all $x,y \in X$ we have
  \begin{equation}
    x \cgeq y \iff p^{{\cal A}{\cal B}}(\phi_1(x_1),\ldots,\phi_n(x_n)) \geq p^{{\cal A}{\cal B}}(\phi_1(y_1),\ldots,\phi_n(y_n)),
  \end{equation}
where $p^{{\cal A}{\cal B}}(\phi_1(z_1),\ldots,\phi_n(z_n))$ is an expression which includes elements of $(\phi_1(z_1),\ldots,\phi_n(z_n))$ and symbols $\vee$
and $\wedge$.
\end{definition}
We can write any lattice polynomial in DNF and CNF as well:
\begin{equation}
  p^{{\cal A}{\cal B}}(\phi_1(z_1),\ldots,\phi_n(z_n)) = \bigwedge_{K \subset {\cal A}} \bigvee_{i \in K} \phi_i(x_i) = \bigvee_{M \subset {\cal B}} \bigwedge_{i \in M} \phi_i(x_i),,
\end{equation}
where ${\cal A} \subset 2^N$ and ${\cal B} \subset 2^N$ are some collection of subsets of $N$. Obviously, order statistic, hence MIN and
MAX are special cases of an order polynomial.

The following result states that all aforementioned models are special cases of the Choquet integral.
\begin{theorem}[\citealp{murofushi1993some}]
  \label{theo:choquet-eqiv-LP}
  The Choquet integral with respect to a capacity $\nu$ is a lattice polynomial function if and only if $\nu$ is a 0--1 capacity (i.e. only takes
  values 0 or 1). Moreover, any lattice polynomial function on $\mathbb{R}$ is a Choquet integral with respect to a 0--1 capacity.
\end{theorem}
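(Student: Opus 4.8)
The plan is to pin down exactly what the Choquet integral computes when $\nu$ is a $0$--$1$ capacity, show that this is visibly a lattice polynomial, and then dispatch the two converse claims by evaluating on indicator vectors and by an explicit construction. Writing $C_\nu$ for the Choquet integral, the pivotal identity I would aim to prove is that, for any $0$--$1$ capacity $\nu$,
\begin{equation}
  \label{eq:maxmin-target}
  C_\nu(x) = \bigvee_{A \,:\, \nu(A)=1} \; \bigwedge_{i \in A} x_i .
\end{equation}
The right-hand side is a disjunction of conjunctions of the coordinates, hence a lattice polynomial in DNF (with $\mathcal{B}$ the family of $\nu$-winning sets), so \eqref{eq:maxmin-target} immediately gives the direction ``$0$--$1$ capacity $\Rightarrow$ lattice polynomial.''

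To establish \eqref{eq:maxmin-target} I would use the sorted (telescoping) form of the Choquet integral. Fix $x$ and a permutation $\sigma$ with $x_{\sigma(1)} \le \cdots \le x_{\sigma(n)}$, and set $B_i = \{\sigma(i),\ldots,\sigma(n)\}$, so that $N = B_1 \supseteq B_2 \supseteq \cdots \supseteq B_{n+1} = \emptyset$ and
\begin{equation}
  C_\nu(x) = \sum_{i=1}^{n} x_{\sigma(i)} \bigl(\nu(B_i) - \nu(B_{i+1})\bigr).
\end{equation}
Because $\nu$ is monotone and takes only the values $0,1$, the sequence $\nu(B_1),\ldots,\nu(B_{n+1})$ is nonincreasing, starts at $\nu(N)=1$ and ends at $\nu(\emptyset)=0$; it therefore has the shape $1,\ldots,1,0,\ldots,0$ and drops exactly once, at the largest index $i^\ast$ with $\nu(B_{i^\ast})=1$. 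Every increment then vanishes except the unit jump at $i^\ast$, so the sum collapses to $C_\nu(x) = x_{\sigma(i^\ast)}$. It remains to identify this value with the max--min. Since $B_{i^\ast}$ is $\nu$-winning and $x_{\sigma(i^\ast)} = \min_{j \in B_{i^\ast}} x_j$, the right-hand side of \eqref{eq:maxmin-target} is at least $x_{\sigma(i^\ast)}$; conversely, for any winning $A$, letting $\sigma(k)$ be its coordinate of least value gives $A \subseteq B_k$, whence $\nu(B_k)=1$ forces $k \le i^\ast$, so $\min_{j\in A} x_j = x_{\sigma(k)} \le x_{\sigma(i^\ast)}$. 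This yields equality.

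For the converse direction I would exploit the rigidity of lattice polynomials on the Boolean cube. Any lattice polynomial $p$ is built from coordinates using only $\vee$ and $\wedge$ with no constants, so it maps $\{0,1\}^n$ into $\{0,1\}$. Evaluating at the indicator vector $\mathbf{1}_A$ and using the standard fact $C_\nu(\mathbf{1}_A) = \nu(A)$, the hypothesis $C_\nu = p$ forces
\begin{equation}
  \nu(A) = p(\mathbf{1}_A) \in \{0,1\} \qquad \text{for every } A \subseteq N,
\end{equation}
so $\nu$ is a $0$--$1$ capacity. Finally, for the ``moreover'' statement I would run the construction backwards: given a lattice polynomial written as $p(x) = \bigvee_{M \in \mathcal{B}} \bigwedge_{i \in M} x_i$, define $\nu(A)=1$ whenever $A \supseteq M$ for some $M \in \mathcal{B}$, and $\nu(A)=0$ otherwise. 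This $\nu$ is monotone and $0$--$1$ with $\nu(\emptyset)=0$, $\nu(N)=1$, and its winning family has precisely $\mathcal{B}$ as its minimal elements, so \eqref{eq:maxmin-target} returns $C_\nu = p$.

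The main obstacle is the identity \eqref{eq:maxmin-target}; everything else is a one-line evaluation or a direct construction. The one genuinely delicate point is ties in the sorting: when several coordinates coincide, $\sigma$ and $i^\ast$ are not unique, and one must check that $x_{\sigma(i^\ast)}$ and the max--min are nonetheless well defined and agree. The remaining steps — the collapse of the telescoping sum and the two converse arguments — are routine once the monotone ``staircase'' structure of $\nu(B_1),\ldots,\nu(B_{n+1})$ is recognised; in particular, taking the sorted form as the definition over $\mathbb{R}$ handles negative coordinates automatically, with no separate treatment of positive and negative parts.
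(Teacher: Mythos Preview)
The paper does not give its own proof of this theorem; it is quoted as a result of Murofushi and Sugeno and used as a black box (notably in Lemma~\ref{lm:ord-01-cap} and the lemma following it). So there is nothing in the paper to compare your argument against.

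On its own merits your proof is correct and is in fact the standard route. The identity \eqref{eq:maxmin-target} together with the telescoping-sum argument is exactly how one usually establishes that a $0$--$1$ capacity yields a lattice polynomial; the indicator evaluation $C_\nu(\mathbf{1}_A)=\nu(A)$ is the natural way to force $\nu$ to be $0$--$1$; and defining $\nu$ as the up-set generated by the DNF clauses is the canonical construction for the ``moreover'' part. One small tightening: in the inequality direction ``for any winning $A$, let $\sigma(k)$ be its coordinate of least value,'' it is cleaner (and avoids the tie issue you flag) to define $k$ as the \emph{smallest index} with $\sigma(k)\in A$; then $A\subseteq B_k$ holds by construction regardless of ties, $\min_{j\in A}x_j=x_{\sigma(k)}$, and monotonicity gives $k\le i^\ast$ directly. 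With that phrasing the tie caveat disappears and nothing else needs adjustment.
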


\subsection{Previous characterizations of the ordinal models}
\label{sec:characterization}

Some known characterizations of the models presented in the previous section are due to \cite{BouyssouGrecoMatarazzoPirlotSlowinski2002a},
see also \cite{sounderpandian1991value} and \cite{segal2002min}.

\begin{theorem}[\citealp{BouyssouGrecoMatarazzoPirlotSlowinski2002a}]
  $\cgeq$ can be represented by MAX if $\cgeq$ is a weak order and the following equivalent conditions hold:
  \begin{enumerate}
  \item For all $i \in N, x_i, y_i \in X_i, a_{-i},b_{-i} \in X_{-i}$ and $w \in X$, we have
    \begin{equation}
      [x_ia_{-i} \cgeq w] \Rightarrow [y_ia_{-i} \cgeq w \text{ OR } x_ib_{-i} \cgeq w]
    \end{equation}
  \item For all $x,y \in X, i \in N$:
    \begin{equation}
      [x_iy_{-i} \cgeq x] \text{ OR } [y_ix_{-i} \cgeq x]
    \end{equation}
  \item For all $i \in N, y_i \in X_i, z_{-i} \in X_{-i}, x \in X$:
    \begin{equation}
      [y_ix_{-i} \cgt x] \Rightarrow [y_iz_{-i} \cgt x].
    \end{equation}
  \end{enumerate}
\end{theorem}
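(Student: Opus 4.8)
The statement bundles two assertions: that conditions 1--3 are equivalent, and that a weak order $\cgeq$ satisfying any one of them admits a MAX representation. The equivalences are routine substitution arguments; the representation is the substance, so I would attack it first. A useful preliminary is to verify necessity, since it reveals what the conditions ``mean'': in a MAX model $x \cgeq w$ forces $\bigvee_{i \in N}\phi_i(x_i) \geq \bigvee_{j \in N}\phi_j(w_j)$, so some single coordinate of $x$ already dominates $w$, and no change to the remaining coordinates can undo this. Each of conditions 1--3 is a relation-language encoding of exactly this phenomenon (``one good coordinate suffices and cannot be spoiled''). My plan is to turn this slogan into a statement about upper contour sets and then read the value functions straight off.

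The key lemma I would prove is that condition 1 is equivalent to the following decomposition of every upper contour set $U(w) = \{x \in X : x \cgeq w\}$:
\begin{equation*}
  U(w) = \bigcup_{i \in N} B_i^w \times X_{-i}, \qquad B_i^w := \{x_i \in X_i : x_ib_{-i} \cgeq w \text{ for all } b_{-i} \in X_{-i}\},
\end{equation*}
i.e. membership in $U(w)$ is always witnessed by a single coordinate whose value alone guarantees $\cgeq w$. The inclusion $\supseteq$ is immediate. For $\subseteq$ I argue by contradiction: if $x \cgeq w$ but no coordinate witnesses, then for each $i$ there is $b^{(i)}_{-i}$ with $x_ib^{(i)}_{-i} \clt w$; feeding $a_{-i} = x_{-i}$ and $b_{-i} = b^{(i)}_{-i}$ into condition 1 kills its second disjunct and yields $y_ix_{-i} \cgeq w$ for every $y_i$, freeing coordinate $i$. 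Re-applying condition 1 after each coordinate has been freed, one coordinate at a time, eventually forces every element into $U(w)$; but then $B_i^w = X_i$ and the decomposition holds trivially. The converse (decomposition $\Rightarrow$ condition 1) is a one-line case split on which coordinate witnesses $x_ia_{-i} \cgeq w$.

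For the representation I would first invoke a numerical representation $u$ of the weak order, $x \cgeq y \iff u(x) \geq u(y)$ --- automatic when $X$ is finite or countable, and otherwise supplied by the paper's standing order-density assumption; composing with a bounded increasing transform I may take $u$ bounded. Then define
\begin{equation*}
  \phi_i(x_i) := \inf_{b_{-i} \in X_{-i}} u(x_ib_{-i}),
\end{equation*}
the value of coordinate $i$ with the others pushed to their worst. I claim $\bigvee_{i \in N} \phi_i(x_i) = u(x)$. The inequality ``$\leq$'' follows by taking $b_{-i} = x_{-i}$ in each infimum. The inequality ``$\geq$'' is precisely the decomposition lemma applied with $w = x$: reflexivity gives $x \in U(x)$, hence some $i$ has $x_i \in B_i^x$, i.e. $x_ib_{-i} \cgeq x$ for all $b_{-i}$, i.e. $\phi_i(x_i) \geq u(x)$. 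Consequently $x \cgeq y \iff u(x) \geq u(y) \iff \bigvee_{i \in N} \phi_i(x_i) \geq \bigvee_{i \in N} \phi_i(y_i)$, the desired MAX representation. The equivalences $1 \Leftrightarrow 2 \Leftrightarrow 3$ I would then dispatch by substitution: for instance $1 \Rightarrow 2$ is the instance $w = x$, $a_{-i} = x_{-i}$, $b_{-i} = y_{-i}$ of condition 1, and the remaining implications follow similarly, using completeness of $\cgeq$ to trade $\cgt$ against $\cleq$.

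The main obstacle is the $\subseteq$ half of the decomposition lemma: passing from the single-coordinate content of condition 1 to the \emph{global} cylinder structure requires the inductive ``freeing'' of coordinates, and one must check that freeing a later coordinate does not disturb the freedom already obtained for earlier ones (it does not, since each freed coordinate is quantified universally and survives subsequent substitutions). A secondary technical point is the background separability needed both for $u$ to exist and for the infimum defining $\phi_i$ to be finite; without it a MAX representation into $\mathbb{R}$ may simply fail to exist, so this hypothesis (or the ambient structure carried over from \citep{timonin2016axiomatization}) is doing genuine work rather than being cosmetic. The strict/weak bookkeeping in the equivalence involving condition 3 is fiddly but routine.
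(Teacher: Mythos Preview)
The paper does not prove this theorem; it is quoted verbatim, without proof, as a known result from \citet{BouyssouGrecoMatarazzoPirlotSlowinski2002a} in the section on previous characterizations. There is therefore no in-paper proof to compare your proposal against.

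Your argument is nonetheless sound on its own terms. The decomposition lemma---that under condition~1 every upper contour set $U(w)$ is a finite union of coordinate cylinders $B_i^w \times X_{-i}$---is the correct structural content of the axiom, and your inductive ``freeing'' of coordinates establishes it cleanly: once the second disjunct of condition~1 is killed by the witness $b^{(i)}_{-i}$, the first disjunct makes coordinate $i$ arbitrary, and the quantification survives subsequent applications at other coordinates because you always keep the original $x_j$ in the $j$-slot when you invoke condition~1 there. The definition $\phi_i(x_i) = \inf_{b_{-i}} u(x_ib_{-i})$ then gives the exact identity $\bigvee_i \phi_i(x_i) = u(x)$, which is slightly more than you need but makes the representation immediate. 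Your caveats about order-density (for $u$ to exist) and boundedness (for the infimum to be finite) are well placed; the original source assumes a weak order admitting a numerical representation, so these hypotheses are genuinely required and not cosmetic. The equivalence of conditions 1--3 is, as you say, bookkeeping once the representation is in hand.
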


\begin{theorem}[\citealp{BouyssouGrecoMatarazzoPirlotSlowinski2002a}]
  $\cgeq$ can be represented by MIN if $\cgeq$ is a weak order and the following equivalent conditions hold:
  \begin{enumerate}
  \item For all $i \in N, x_i, y_i \in X_i, a_{-i},b_{-i} \in X_{-i}$ and $w \in X$, we have
    \begin{equation}
      [w \cgeq x_ia_{-i}] \Rightarrow [w \cgeq y_ia_{-i} \text{ OR } w \cgeq x_ib_{-i}]
    \end{equation}
  \item For all $x,y \in X, i \in N$:
    \begin{equation}
      [x \cgeq x_iy_{-i}] \text{ OR } [x \cgeq y_ix_{-i}]
    \end{equation}
  \item For all $i \in N, y_i \in X_i, z_{-i} \in X_{-i}, x \in X$:
    \begin{equation}
      [x \cgt y_ix_{-i}] \Rightarrow [x \cgt y_iz_{-i}].
    \end{equation}
  \end{enumerate}
\end{theorem}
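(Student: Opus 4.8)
The plan is to derive the MIN characterization from the MAX characterization proved in the preceding theorem by a duality argument, rather than repeating the whole construction. The key observation is that MIN and MAX are interchanged under reversal of the preference order, since $\min_i a_i = -\max_i(-a_i)$. Concretely, I would introduce the reversed relation $\cgeq^{\ast}$ defined by $x \cgeq^{\ast} y \iff y \cgeq x$, and note first that $\cgeq$ is a weak order if and only if $\cgeq^{\ast}$ is, and that its strict part satisfies $a \cgt^{\ast} b \iff b \cgt a$.

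Next I would check that the three MIN conditions for $\cgeq$ are, term by term, the three MAX conditions for $\cgeq^{\ast}$. For instance condition (1) here, namely $[w \cgeq x_i a_{-i}] \Rightarrow [w \cgeq y_i a_{-i} \text{ OR } w \cgeq x_i b_{-i}]$, rewrites as $[x_i a_{-i} \cgeq^{\ast} w] \Rightarrow [y_i a_{-i} \cgeq^{\ast} w \text{ OR } x_i b_{-i} \cgeq^{\ast} w]$, which is precisely MAX condition (1) for $\cgeq^{\ast}$; the same substitution turns MIN conditions (2) and (3) into MAX conditions (2) and (3), with the strict relation reversed as above. Since the equivalence of the three conditions has already been established in the MAX theorem, the equivalence of the three MIN conditions follows at once by applying that result to $\cgeq^{\ast}$.

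With the hypotheses translated, I would apply the MAX theorem to $\cgeq^{\ast}$ to obtain value functions $\psi_i$ with $x \cgeq^{\ast} y \iff \bigvee_{i \in N} \psi_i(x_i) \geq \bigvee_{i \in N} \psi_i(y_i)$. Setting $\phi_i = -\psi_i$ and using $\bigvee_{i \in N}(-\phi_i(z_i)) = -\bigwedge_{i \in N}\phi_i(z_i)$, and reversing $\cgeq^{\ast}$ back to $\cgeq$, this unwinds to $x \cgeq y \iff \bigwedge_{i \in N}\phi_i(x_i) \geq \bigwedge_{i \in N}\phi_i(y_i)$, which is exactly the desired MIN representation.

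I do not expect a genuine obstacle here: once the duality is in place the argument is mechanical. The only point needing care is the bookkeeping with the orientation of the inequalities and of the strict relation when passing to $\cgeq^{\ast}$ and back, in particular confirming that $\phi_i = -\psi_i$ sends $\bigvee$ to $\bigwedge$ while flipping the inequality exactly once so that its final direction is preserved. If one instead wanted a self-contained proof that does not invoke the MAX theorem as a black box, the harder step would be the direct construction of the $\phi_i$: one would use condition (2) to show that the component orders induced by $\cgeq$ are mutually comparable, hence embeddable in a single common scale, and condition (3) to guarantee that each coordinate can unilaterally force the outcome below a given level, so that the global value depends only on the worst coordinate, mirroring the construction in the MAX case.
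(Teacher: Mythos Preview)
Your duality argument is correct and carries through exactly as you describe: the three MIN conditions for $\cgeq$ are verbatim the three MAX conditions for the reversed order $\cgeq^{\ast}$, and negating the value functions turns a MAX representation of $\cgeq^{\ast}$ into a MIN representation of $\cgeq$. The bookkeeping you flag (one inequality flip from $\cgeq^{\ast}$ to $\cgeq$, one from $\bigvee(-\phi_i)=-\bigwedge\phi_i$, cancelling to preserve direction) is fine.

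There is, however, nothing in the paper to compare your proof against. This theorem appears in Section~\ref{sec:characterization} purely as a quotation of a result of \cite{BouyssouGrecoMatarazzoPirlotSlowinski2002a}; the paper states it without proof and then moves on to its own unified characterization in Section~\ref{sec:new-results}. So while your proposal is a perfectly good way to obtain the result (and duality with MAX is indeed the standard shortcut), the paper itself does not supply any argument here for you to match or diverge from.
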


\begin{theorem}[\citealp{BouyssouGrecoMatarazzoPirlotSlowinski2002a}]
  $\cgeq$ can be represented by $OS_{n-1}$ if $\cgeq$ is a weak order and the following equivalent conditions hold:
  \begin{enumerate}
  \item For all $i,j \in N (i \neq j), x_i, y_i \in X_i, x_j, y_j \in X_j, a_{-i} \in X_{-i}, b_{-j} \in X_{-j}, c_{-ij} \in X_{-ij}$ and $w \in X$, we have
    \begin{equation}
      [x_ia_{-i} \cgeq w \text{ AND } x_jb_{-j} \cgeq w] \Rightarrow [y_ia_{-i} \cgeq w \text{ OR } y_jb_{-j} \cgeq w \text { OR } x_{ij}c_{-ij} \cgeq w]
    \end{equation}
  \item For all $x,y \in X, i,j \in N (i \neq j)$:
    \begin{equation}
      [x_iy_{-i} \cgeq x \text{ AND } x_jy_{-j} \cgeq x] \text{ OR } [y_{ij}x_{-ij} \cgeq x]
    \end{equation}
  \item For all $x,y \in X$, all $i,j \in N (i \neq j)$, and all $z_{-ij} \in X_{-ij}$:
    \begin{equation}
      [y_ix_{-i} \cgt x \text{ AND } y_jx_j \cgt x] \Rightarrow [y_{ij}z_{-ij} \cgt x].
    \end{equation}
  \end{enumerate}
\end{theorem}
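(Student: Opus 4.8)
The plan is to treat the three axiom forms as one condition and prove the claimed sufficiency in two movements: a purely logical one (the equivalence of (1)--(3)) and a constructive one (building the value functions and verifying the formula). Throughout I use that $OS_{n-1}$ is, in DNF, $OS_{n-1}(\phi(z)) = \bigvee_{\{i,j\}} \bigl(\phi_i(z_i) \wedge \phi_j(z_j)\bigr)$, i.e.\ the second largest of $\phi_1(z_1),\dots,\phi_n(z_n)$, and in CNF it equals $\bigwedge_{|K|=n-1}\bigvee_{i\in K}\phi_i(z_i)$.

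First I would establish (1) $\Leftrightarrow$ (2) $\Leftrightarrow$ (3). Since $\cgeq$ is complete and transitive, every ``$\cgt$'' may be rewritten as the negation of a ``$\cleq$,'' and each disjunction read as a contrapositive implication. Form (2) is the instance of (1) obtained by setting $w=x$ and choosing the free tuples $a_{-i},b_{-j}$ to coincide with the $x$- and $y$-parts appearing in (2); conversely (1) is recovered by noting that the profile $c_{-ij}$ (resp.\ $z_{-ij}$) on the untouched coordinates is entirely free. I expect this step to be pure bookkeeping, with completeness being exactly what lets the ORs and the strict/non-strict variants interchange.

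The substantive content is the construction, and the reading of (3) is the key. It says: if coordinate $i$ alone can lift an alternative strictly above $x$, and coordinate $j$ alone can do the same, then fixing those two coordinates and filling the remaining places with an arbitrary $z_{-ij}$ still beats $x$ -- precisely the statement that only the two best coordinates carry information about $\cgeq$, the signature of the second order statistic. Heuristically, the DNF $\bigvee_{\{i,j\}}(\phi_i\wedge\phi_j)$ exhibits $OS_{n-1}$ as a MAX over ``pair-coordinates'' $\phi_i\wedge\phi_j$, and the axioms above are the two-coordinate lift of the MAX conditions of the preceding theorem (dually, the CNF relates them to the MIN conditions). One cannot quote the MAX theorem as a black box, however, since the pair-coordinates share indices and do not live on a product space, so I would build the scale directly: using that $\cgeq$ is a weak order whose quotient order-embeds into $\mathbb{R}$, calibrate $\phi_i(a)$ for each level $a\in X_i$ by \emph{exposing} it -- embed $a$ into an alternative whose coordinate $i$ is $a$, one other coordinate is set as high as possible and the rest as low as possible, so that the second largest $\phi$-value is $\phi_i(a)$ itself. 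Comparing such exposed alternatives across coordinates places all the $\phi_i$ on one common scale, condition (3) guaranteeing the untouched coordinates cannot interfere, after which the representation $x\cgeq y \iff OS_{n-1}(\phi(x))\geq OS_{n-1}(\phi(y))$ is checked on the two nontrivial cases (when $x,y$ differ only in their two largest coordinates, and when they differ elsewhere).

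The hard part will be commensurateness: showing the separately defined coordinate scales glue into one, and that it is the second -- not the first, nor a higher -- order statistic that emerges. This is where the ``exactly two coordinates $i,j$'' structure and the universally quantified $z_{-ij}$ do the real work, and where any failure of richness of the $X_i$ (absence of suitably high or low exposing levels, or non-overlapping ranges) would have to be addressed, e.g.\ through an order-density assumption on $\cgeq$ ensuring the exposing levels exist or can be approximated.
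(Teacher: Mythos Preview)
The paper contains no proof of this theorem: it is quoted verbatim from \cite{BouyssouGrecoMatarazzoPirlotSlowinski2002a} in Section~\ref{sec:characterization} (``Previous characterizations of the ordinal models'') purely as background against which the author's own unified axiom~\eqref{eq:a-lp} is to be compared. There is therefore nothing in the present paper to match your attempt against; the proof lives in the cited reference.

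That said, a brief comment on your sketch. Your reading of the DNF/CNF forms and of condition~(3) as the two-coordinate analogue of the MAX axiom is correct, and the equivalence (1)$\Leftrightarrow$(2)$\Leftrightarrow$(3) is indeed routine once completeness is used to swap strict and weak relations. The constructive part, however, leans on an ``exposing'' device---one coordinate pushed as high as possible, the others as low as possible---that presupposes maximal and minimal elements (or at least cofinal/coinitial sequences) in each $X_i$. No such richness is assumed in the statement, and the original proof in \cite{BouyssouGrecoMatarazzoPirlotSlowinski2002a} avoids it: one defines coordinate traces $\cgeq_i$ directly from $\cgeq$, shows from the axiom that these are weak orders whose union on $\bigcup_i X_i$ is itself a weak order (this is where the ``OR'' clauses do their work), and then pulls a single numerical representation of that union back to each factor. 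Your approach can be repaired along these lines, but as written the calibration step would fail on unbounded or order-dense $X_i$ without the extra hypothesis you flag at the end.
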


\subsection{Unified characterization of the ordinal models: $p^{{\cal A}{\cal B}}$ and subcases}
\label{sec:new-results}

Since MIN and MAX are special cases of $OS_k$, which in turn is a special case of the lattice polynomial models $p^{{\cal A}{\cal B}}$, it
is desirable to build a unified characterization for all of them. In this section we provide some steps towards such result.

\begin{theorem}
  $\cgeq$ can be represented by a lattice polynomial $p^{{\cal A}{\cal B}}$ if $\cgeq$ is a weak order, satisfies \textbf{A2}, and for any $w,x \in X$ exist
  $K \in {\cal A}, M \in {\cal B}$ with $K \cap M \neq \emptyset$, such that for any $a_{-K} \in X_{-K}$ and $b_{-M} \in X_{-M}$ we have:
  \begin{equation}
    \label{eq:a-lp}
    \left\{
        \begin{aligned}
          w \cgeq x \Rightarrow w \cgeq a_{-K}x_K, K \in {\cal A},\\
          x \cgeq w \Rightarrow b_{-M}x_M \cgeq w, M \in {\cal B}.
        \end{aligned}
      \right.
  \end{equation}
\end{theorem}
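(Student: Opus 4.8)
The plan is to build on the general representation of \citep{timonin2016axiomatization}. First I would invoke the machinery used there for the general theorem: since $\cgeq$ is a weak order satisfying \textbf{A2}, that construction yields commensurate value functions $\phi_i:X_i\to\mathbb{R}$ on a common scale and a monotone, idempotent representation; under the present hypotheses I would write it as a Choquet integral $U(x)=C_\nu(\phi_1(x_1),\dots,\phi_n(x_n))$ with respect to a normalized capacity $\nu$, so that $x\cgeq y\iff U(x)\ge U(y)$. By Theorem~\ref{theo:choquet-eqiv-LP}, $U$ is a lattice polynomial exactly when $\nu$ is $0$--$1$. Thus the task reduces to showing that \eqref{eq:a-lp} forces the min--max shape, i.e.\ that $U(x)=\bigwedge_{K\in{\cal A}}\bigvee_{i\in K}\phi_i(x_i)=\bigvee_{M\in{\cal B}}\bigwedge_{i\in M}\phi_i(x_i)=p^{{\cal A}{\cal B}}(\phi(x))$.

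Next I would distil \eqref{eq:a-lp} into two pointwise \emph{saturation} identities. Because \textbf{A2} makes the scale rich (every intermediate value is attained), the binding instance of each implication is $w=x$. Taking $w=x$ in the first line gives $x\cgeq a_{-K}x_K$ for all $a_{-K}$, i.e.\ $U(x)\ge U(a_{-K}x_K)$; monotonicity supplies the reverse inequality whenever the off-$K$ coordinates are raised, so for every $x$ there is $K\in{\cal A}$ with $U(x)=U(a_{-K}x_K)$ for every $a_{-K}$ that dominates $x$ off $K$ — raising the coordinates outside $K$ never changes the value. Dually, the second line produces $M\in{\cal B}$ with $K\cap M\neq\emptyset$ such that lowering the coordinates outside $M$ never changes the value.

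From these identities I would read off the structure of $\nu$. The equality for $K$ must survive driving the off-$K$ coordinates to $+\infty$, which is possible only if those coordinates are null, $\nu(\overline K)=0$; symmetrically the equality for $M$ forces $\nu(M)=1$. With $\overline K$ null I may instead push the off-$K$ coordinates down to the bottom of the scale, producing a profile dominated by $\bigvee_{i\in K}\phi_i(x_i)$ and hence, by idempotence, $U(x)\le\bigvee_{i\in K}\phi_i(x_i)$; dually $U(x)\ge\bigwedge_{i\in M}\phi_i(x_i)$. Moreover $\nu(\overline K)=0$ and $\nu(M)=1$ make $U(x)$ depend on $x$ only through the coordinates in $K$ and only through those in $M$, hence only through $K\cap M$. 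Since $K\cap M\neq\emptyset$, the common value is carried by the coordinates of the intersection, and collapsing the sandwich along $K\cap M$ pins $U(x)$ to a single coordinate value; organizing the sets chosen at each $x$ into the fixed families ${\cal A},{\cal B}$ then delivers the conjunctive and disjunctive forms and therefore $U=p^{{\cal A}{\cal B}}(\phi)$.

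I expect the decisive obstacle to be exactly this last collapse. The saturation identities only hand me the sandwich $\bigwedge_{i\in M}\phi_i(x_i)\le U(x)\le\bigvee_{i\in K}\phi_i(x_i)$; tightening it to the exact min--max value — equivalently, showing that the extremizing coordinates fall inside $K\cap M$, so that the conjunctive and disjunctive representations coincide, or in capacity terms that $\nu$ is genuinely $0$--$1$ rather than merely normalized on $K$ and on $M$ — is where the hypothesis $K\cap M\neq\emptyset$ must do the real work. Two subsidiary difficulties accompany it: the sets $K,M$ depend on $x$, so I must verify that these pointwise choices assemble into the fixed ${\cal A},{\cal B}$ defining $p^{{\cal A}{\cal B}}$ (using that a reduced CNF/DNF is an antichain of prime implicates/implicants), and that unbounded or non-attaining scales are handled by passing to suprema and infima rather than to genuine top and bottom elements.
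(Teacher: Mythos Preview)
The paper states this theorem without proof, so there is no argument of the paper's to compare against directly; the paper's own handling of the ordinal models comes only in Section~\ref{sec:relation-our-model}, where it assumes the \emph{full} hypotheses of Theorem~\ref{theo:c3-repr} (i.e.\ \textbf{A1}--\textbf{A9} plus the structural assumptions) together with the condition that each $X^{S_a}$ has a single essential variable, and obtains the 0--1 capacity from that (Lemma~\ref{lm:ord-01-cap}).

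Your opening step, however, contains a genuine gap. You write that ``since $\cgeq$ is a weak order satisfying \textbf{A2}, that construction yields commensurate value functions $\phi_i$ on a common scale and a Choquet integral $U(x)=C_\nu(\phi(x))$''. That is not what \textbf{A2} delivers. In the framework of \citep{timonin2016axiomatization}, \textbf{A2} is merely the existence of weak orders $\cgeq_i$ on the individual coordinate sets (it is \textbf{AC1} of \citealp{bouyssou2009conjoint}, as the paper itself notes just after the lemmas); it does \emph{not} give you a common scale, a global numerical representation, or a Choquet integral --- the latter requires the entire package \textbf{A1}--\textbf{A9}, in particular the tradeoff-type and Archimedean conditions that fix cardinal commensurability. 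With only a weak order, \textbf{A2}, and \eqref{eq:a-lp} at your disposal, there is no $U$ and no $\nu$ to analyse, so your reduction ``show $\nu$ is 0--1 and invoke Theorem~\ref{theo:choquet-eqiv-LP}'' has no object. A workable argument must build the commensurate $\phi_i$ \emph{directly} from \eqref{eq:a-lp}: the binding instance $w=x$ yields, for each $x$, some $i\in K\cap M$ whose coordinate alone fixes the equivalence class of $x$, and it is this ``single essential coordinate'' structure that must be used to manufacture the cross-dimensional scale and then verify that the induced representation coincides with $p^{{\cal A}{\cal B}}$. Your later paragraphs move in that direction, but they all presuppose a $U$ you have not yet earned.
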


Note that, because sets ${\cal A}$ and ${\cal B}$ are finite, the axiom can also be re-written similar to the conditions in the previous
section, i.e. using ``OR'' statements. However, we feel this form is more compact. Particular cases of the above axiom include $OS_k$ and MIN/MAX.

\begin{lemma}
  $\cgeq$ can be represented by $OS_k$ if $\cgeq$ is a weak order, satisfies \textbf{A2}, and for any $w,x \in X$ there exist $K : K \subset N, \vert K \vert = k$ and
  $M : M \subset N, \vert M \vert = n - k + 1$ with $K \cap M \neq \emptyset$, such that for any $a_{-K} \in X_{-K}$ and $b_{-M} \in X_{-M}$
  we have
\begin{equation}
      \left\{
        \begin{aligned}
          w \cgeq x &\Rightarrow w \cgeq a_{-K}x_K, \\
          x \cgeq w &\Rightarrow b_{-M}x_M \cgeq w.
        \end{aligned}
      \right.
\end{equation}
\end{lemma}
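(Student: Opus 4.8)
The plan is to derive the Lemma as a specialization of the preceding Theorem on lattice polynomials. The crucial observation is that $OS_k$ is itself a lattice polynomial $p^{\mathcal{A}\mathcal{B}}$ for a specific choice of the collections. From the CNF/DNF identity recorded above,
\begin{equation}
  OS_k = \bigwedge_{\substack{K \subset N \\ |K|=k}}\bigvee_{i \in K} \phi_i(x_i) = \bigvee_{\substack{M \subset N \\ |M| = n-k+1}}\bigwedge_{i \in M} \phi_i(x_i),
\end{equation}
the appropriate collections are $\mathcal{A} = \{K \subset N : |K| = k\}$ and $\mathcal{B} = \{M \subset N : |M| = n-k+1\}$. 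First I would record this identification, so that the axiom \eqref{eq:a-lp}, once instantiated with these $\mathcal{A}$ and $\mathcal{B}$, reads verbatim as the condition displayed in the Lemma.

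The single point needing verification is the intersection requirement $K \cap M \neq \emptyset$. For a general lattice polynomial this is a genuine constraint, but here it is automatic: since $K, M \subset N$ with $|N| = n$ and $|K| + |M| = k + (n-k+1) = n+1 > n$, the pigeonhole principle forces $K \cap M \neq \emptyset$. Hence every admissible pair $(K,M)$ with $|K| = k$ and $|M| = n-k+1$ qualifies for use in the Theorem's axiom, and the intersection clause in the Lemma's hypothesis is in fact redundant.

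With these observations the proof reduces to invoking the Theorem. The hypotheses coincide: $\cgeq$ is a weak order, satisfies \textbf{A2}, and the displayed implication is precisely \eqref{eq:a-lp} for $\mathcal{A} = \{K : |K| = k\}$ and $\mathcal{B} = \{M : |M| = n-k+1\}$. The Theorem then supplies value functions $\phi_i$ representing $\cgeq$ by $p^{\mathcal{A}\mathcal{B}}$, which for these particular collections is exactly $OS_k$ by the identity above. I do not expect a substantive obstacle: the argument is a translation of quantifiers combined with the elementary counting fact. The only care required is to match the two axioms symbol-for-symbol and to confirm that choosing $\mathcal{A}$ and $\mathcal{B}$ as the full families of subsets of cardinality $k$ and $n-k+1$ does reproduce the order-statistic polynomial rather than some other polynomial on those supports.
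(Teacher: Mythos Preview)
Your proposal is correct and matches the paper's intended route: the paper presents this Lemma immediately after the Theorem with the remark ``Particular cases of the above axiom include $OS_k$ and MIN/MAX,'' i.e.\ it is meant to be read as the specialization $\mathcal{A}=\{K\subset N:|K|=k\}$, $\mathcal{B}=\{M\subset N:|M|=n-k+1\}$, together with the CNF/DNF identity for $OS_k$. Your pigeonhole observation that $|K|+|M|=n+1$ forces $K\cap M\neq\emptyset$ is a correct extra remark showing the intersection clause in the hypothesis is automatically met; the paper does not spell this out.
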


\begin{lemma}
  $\cgeq$ can be represented by MIN if $\cgeq$ is a weak order and for any $w,x \in X$ exists $i \in N$, such that for
  any $a_{-i} \in X_{-i}$ we have
\begin{equation}
      \left\{
        \begin{aligned}
          w \cgeq x &\Rightarrow w \cgeq a_{-i}x_i,\\
          x \cgeq w &\Rightarrow x \cgeq w.
        \end{aligned}
      \right.
\end{equation}
\end{lemma}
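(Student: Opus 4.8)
The plan is to construct the MIN representation directly from a numerical representation of the weak order, avoiding \textbf{A2} entirely and thereby explaining why it can be dropped in this case. Since $\cgeq$ is a weak order, I would fix a real-valued representation $u:X\to\mathbb{R}$ with $x\cgeq y\iff u(x)\ge u(y)$; such a $u$ exists under the framework's standing assumptions, and is in any case necessary, since $\bigwedge_{i\in N}\phi_i$ would itself be a real representation of $\cgeq$. Composing with a bounded strictly increasing function if needed, I may take $u$ to have range in $[0,1]$.

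Next I would define the candidate value functions by
\begin{equation}
  \phi_i(x_i):=\sup_{a_{-i}\in X_{-i}}u(a_{-i}x_i).
\end{equation}
Boundedness of $u$ guarantees $\phi_i(x_i)\in[0,1]$, so each $\phi_i:X_i\to\mathbb{R}$ is well defined and depends on $x_i$ alone. Taking $a_{-i}=x_{-i}$ gives $\phi_i(x_i)\ge u(x)$ for every $i$, hence $\bigwedge_{i\in N}\phi_i(x_i)\ge u(x)$ for all $x$.

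The crux is the reverse inequality, and this is exactly where the axiom enters. Applying the hypothesis to the pair $w=x$ together with the same $x$, the premise $x\cgeq x$ holds, so there exists a coordinate $i$ with $x\cgeq a_{-i}x_i$ for every $a_{-i}$; equivalently $u(a_{-i}x_i)\le u(x)$ for all $a_{-i}$, whence $\phi_i(x_i)\le u(x)$ for that $i$. Combined with $\phi_i(x_i)\ge u(x)$ this yields $\phi_i(x_i)=u(x)$, while $\phi_j(x_j)\ge u(x)$ for every $j$. Therefore $\bigwedge_{i\in N}\phi_i(x_i)=u(x)$ for all $x\in X$, and since $u$ represents $\cgeq$ we obtain $x\cgeq y\iff\bigwedge_{i\in N}\phi_i(x_i)\ge\bigwedge_{i\in N}\phi_i(y_i)$, which is precisely the MIN representation.

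The conceptual content is light; the only point requiring real care is ensuring the $\phi_i$ take finite real values, which I handle by passing at the outset to a bounded representation $u$ (legitimate, and harmless since a common increasing transformation of all value functions preserves MIN, as $\min_i g(\phi_i(x_i))=g(\min_i\phi_i(x_i))$ for increasing $g$). I would also remark that this lemma is the $k=1$ instance of the preceding $OS_k$ lemma, with $K=\{i\}$ a singleton and $M=N$, so it could alternatively be deduced from the unified lattice-polynomial theorem; the direct argument above, however, is self-contained and makes transparent why \textbf{A2} is superfluous here.
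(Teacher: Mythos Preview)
Your argument is correct. The definition $\phi_i(x_i)=\sup_{a_{-i}}u(a_{-i}x_i)$ immediately gives $\phi_i(x_i)\ge u(x)$, and instantiating the axiom with $w=x$ produces an index $i$ with $u(a_{-i}x_i)\le u(x)$ for all $a_{-i}$, forcing $\phi_i(x_i)=u(x)$ and hence $\bigwedge_j\phi_j(x_j)=u(x)$. The only delicate point---existence of a real-valued $u$---you handle correctly by noting it is both necessary for any MIN representation and supplied by the paper's standing structural assumptions; passing to a bounded $u$ is harmless for exactly the reason you give.

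The paper, by contrast, does not prove this lemma directly. It presents the MIN statement as the $k=1$ specialization of the $OS_k$ lemma (hence of the lattice-polynomial theorem), and separately observes that the first clause coincides with one of the equivalent conditions in the Bouyssou--Greco--Matarazzo--Pirlot--S\l owi\'nski characterization of MIN, so the result can also be read off from that prior theorem. The paper's only substantive remark specific to MIN/MAX is that the axiom here \emph{implies} \textbf{A2}, which is why \textbf{A2} need not be assumed. Your construction is therefore a genuinely different route: it is self-contained, does not invoke the general $p^{{\cal A}{\cal B}}$ machinery or the cited literature, and---as you note---makes transparent why \textbf{A2} is redundant, since the marginal orders $\cgeq_i$ play no role in the argument at all. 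What the paper's route buys is economy (one theorem, several corollaries); what yours buys is an explicit formula for the $\phi_i$ and a proof that stands on its own.
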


\begin{lemma}
  $\cgeq$ can be represented by MAX if $\cgeq$ is a weak order and for any $w,x \in X$ exists $i \in N$, such that for any $b_{-i} \in X_{-i}$ we have
\begin{equation}
      \left\{
        \begin{aligned}
          w \cgeq x &\Rightarrow w \cgeq x,\\
          x \cgeq w &\Rightarrow b_{-i}x_i \cgeq w.
        \end{aligned}
      \right.
\end{equation}
\end{lemma}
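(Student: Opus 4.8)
The plan is to obtain the MAX representation for free from the already-established MIN lemma by a duality (order-reversal) argument, exploiting that MIN and MAX are dual idempotent aggregators. First I would introduce the reversed relation $\cgeq'$, defined by $u \cgeq' v \iff v \cgeq u$, and record the two elementary facts that drive the reduction: $\cgeq'$ is a weak order whenever $\cgeq$ is, and the MAX axiom for $\cgeq$ is literally the MIN axiom for $\cgeq'$.

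Then I would verify this axiom translation line by line. Writing the MIN axiom for $\cgeq'$: for any $w,x$ there is $i \in N$ such that for all $a_{-i}$ we have $w \cgeq' x \Rightarrow w \cgeq' a_{-i}x_i$ and $x \cgeq' w \Rightarrow x \cgeq' w$. Rewriting each occurrence of $\cgeq'$ as the converse of $\cgeq$ turns the first (nontrivial) implication into $x \cgeq w \Rightarrow a_{-i}x_i \cgeq w$ and the second into the tautology $w \cgeq x \Rightarrow w \cgeq x$. These are exactly the nontrivial and trivial implications of the MAX axiom for $\cgeq$, with the same witness coordinate $i$ and the same universal quantifier over $a_{-i}=b_{-i}$. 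Hence the hypotheses of the MIN lemma hold for $\cgeq'$.

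Next I would apply the MIN lemma to $\cgeq'$ to obtain value functions $\psi_i : X_i \to \mathbb{R}$ with $u \cgeq' v \iff \bigwedge_i \psi_i(u_i) \geq \bigwedge_i \psi_i(v_i)$. Setting $\phi_i = -\psi_i$ and using $\bigwedge_i \psi_i = -\bigvee_i \phi_i$, then unfolding $\cgeq'$ back to $\cgeq$ and flipping the resulting inequality, yields $x \cgeq y \iff \bigvee_i \phi_i(x_i) \geq \bigvee_i \phi_i(y_i)$, which is precisely the MAX representation.

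The main point requiring care is the bookkeeping in these two translations: matching the trivial versus nontrivial implications across the reversal and keeping the sign flip consistent so that the $\bigwedge$ of the $\psi_i$ becomes the $\bigvee$ of the $\phi_i$. I would also note explicitly that no extra structural axiom enters, since the MIN lemma already dispenses with \textbf{A2}, so its dual does too, and any ambient representability condition is symmetric under reversing $\cgeq$. As a cross-check, this agrees with the specialization of the lattice-polynomial theorem to ${\cal A} = \{N\}$ and ${\cal B} = \{\{i\} : i \in N\}$, where the single conjunctive clause $K=N$ forces $a_{-K}x_K = x$ and collapses the first implication to a tautology, leaving exactly the MAX axiom.
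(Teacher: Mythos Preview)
Your duality argument is correct and complete: reversing $\cgeq$ turns the MAX axiom into the MIN axiom verbatim, and negating the value functions converts $\bigwedge_i \psi_i$ into $\bigvee_i(-\psi_i)$, so a MIN representation for $\cgeq'$ yields a MAX representation for $\cgeq$.

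The paper, however, does not supply an independent proof of this lemma. It is recorded as the specialization of the lattice-polynomial theorem to ${\cal A}=\{N\}$ and ${\cal B}=\{\{i\}:i\in N\}$: with $K=N$ the first implication collapses to the tautology $w\cgeq x\Rightarrow w\cgeq x$, and the surviving implication is observed to coincide with the third equivalent condition in the Bouyssou--Greco--Matarazzo--Pirlot--S{\l}owi\'nski characterization of MAX recalled earlier. You already note this specialization as your closing cross-check, so you are aware of the paper's intended route. The difference is one of packaging: your reduction to the MIN lemma is self-contained and makes the MIN/MAX symmetry explicit without invoking the general $p^{{\cal A}{\cal B}}$ statement or \textbf{A2}, whereas the paper's presentation subsumes all four ordinal models under a single axiom scheme and then reads off MIN and MAX as the two degenerate endpoints. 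Both reach the same conclusion under the same hypotheses.
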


The second condition in two last lemmas is trivial and is given only to emphasize the similarity of the axiom to the one used above. Note
also, that the first conditions in MIN/MAX characterizations are identical to those given in Section \ref{sec:characterization}.

Although the condition in two last lemmas is sufficient for characterization of MIN and MAX, in general, variations of \eqref{eq:a-lp} are
not powerful enough to characterize $p^{{\cal A}{\cal B}}$ and $OS_k$. One reason for this is that in the MIN/MAX case the axioms imply our
\textbf{A2} (the axiom that is called \textbf{AC1} in \cite{bouyssou2009conjoint}) -- in other words they imply existence of weak orders on
individual dimensions. This does not seem to be the case for the $p^{{\cal A}{\cal B}}$ and $OS_k$ conditions that we gave. Hence, we had to add
\textbf{A2} to the first two results.

\subsection{Characterization of the ordinal models in our framework}
\label{sec:relation-our-model}

In \citep{timonin2016axiomatization} we gave details of the construction of the Choquet integral for cases when every subset $X^{S_i}$ has
only one essential variable. We  now provide more details on this result.

\begin{lemma}
  \label{lm:ord-01-cap}
  Let the conditions of Theorem \ref{theo:c3-repr} hold and let there be only one essential variable on each $X^{S_a}$. Then, $\nu$ is a 0--1 capacity.
\end{lemma}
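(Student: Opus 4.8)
The plan is to read the structure of $\nu$ off the behaviour of the Choquet integral on each region separately, using the fact that on a comonotone region the integral degenerates to a weighted average of the transformed coordinates. Fix a region $X^{S_a}$ and let $\sigma$ be the permutation governing the common ordering there, so that $\phi_{\sigma(1)}(x_{\sigma(1)}) \le \cdots \le \phi_{\sigma(n)}(x_{\sigma(n)})$ throughout $X^{S_a}$. By Theorem \ref{theo:c3-repr} the representation is a Choquet integral $C_\nu$, which on this region takes the comonotone-additive form
\begin{equation}
  C_\nu(\phi(x)) = \sum_{j=1}^{n} w^{a}_{j}\,\phi_{\sigma(j)}(x_{\sigma(j)}), \qquad w^{a}_{j} = \nu(B_j) - \nu(B_{j+1}),
\end{equation}
where $B_j = \{\sigma(j),\ldots,\sigma(n)\}$ and $N = B_1 \supset B_2 \supset \cdots \supset B_n \supset B_{n+1} = \emptyset$ is the maximal chain in $2^N$ determined by $\sigma$. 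Monotonicity of $\nu$ gives $w^{a}_{j} \ge 0$, and the sum telescopes to $\nu(N) - \nu(\emptyset) = 1$.

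First I would make precise the correspondence between essentiality and these weights: coordinate $\sigma(j)$ is essential on $X^{S_a}$ precisely when $w^{a}_{j} > 0$ (with $\phi_{\sigma(j)}$ non-constant there), since perturbing that coordinate while staying inside the region changes $C_\nu$ if and only if its weight is positive. The hypothesis that $X^{S_a}$ carries only one essential variable then forces exactly one of the $w^{a}_{j}$ to be nonzero; since the weights are nonnegative and sum to $1$, that weight equals $1$ and all others vanish. Hence $\nu(B_j) - \nu(B_{j+1}) \in \{0,1\}$ for every $j$, and since $\nu(B_1) = 1$ and $\nu(B_{n+1}) = 0$, the capacity is $1$ on an initial segment of the chain and $0$ thereafter. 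In particular $\nu(B_j) \in \{0,1\}$ for every set $B_j$ appearing on the chain attached to $X^{S_a}$. A routine side point to dispatch here is that the number of essential variables cannot be zero, since the weights sum to $1$, so ``one essential variable'' genuinely means exactly one positive weight.

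Finally I would let the region vary. Each permutation $\sigma$ yields a maximal chain $B_1 \supset \cdots \supset B_{n+1}$, and every subset $T \subseteq N$ lies on at least one such chain; as $a$ ranges over the comonotone regions the associated chains sweep out the whole Boolean lattice $2^N$. Combining the per-region conclusion gives $\nu(T) \in \{0,1\}$ for every $T$, i.e. $\nu$ is a $0$--$1$ capacity (consistently with Theorem \ref{theo:choquet-eqiv-LP}, which then identifies the integral with a lattice polynomial of the kind discussed in Section \ref{sec:ordinal-models}). The main obstacle is exactly this covering step: I must argue that every maximal chain really arises from an admissible region $X^{S_a}$ — equivalently, that the ranges of the $\phi_i$ overlap enough for each ordering $\sigma$ to be realized. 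Where a given ordering is not realizable, the corresponding $\nu$-values are not pinned down directly on a cone, and I expect to close that gap by using monotonicity of $\nu$ to sandwich the unconstrained values between the $0$s and $1$s supplied by the chains that are realized.
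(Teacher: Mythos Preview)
Your argument is essentially the paper's, made explicit: the paper's proof simply notes that by construction $C(\nu,x)=f_i(x_i)$ on each region (where $i$ is the single essential coordinate) and says the conclusion then follows ``by the definition of the Choquet integral and monotonicity of $\nu$'', which is precisely your chain-of-weights computation. The covering step you flag is not argued separately in the paper --- it is absorbed into the appeal to the construction and the structural assumptions behind Theorem~\ref{theo:c3-repr}, under which every ordering arises as some $X^{S_a}$, so the monotonicity sandwich is not needed.
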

\begin{proof}
  This immediately follows by construction (see Section \ref{sec:case-with-single}). As at every $x \in X$ we have $C(\nu,x) = f_i(x_i)$,
  where $i$ is the variable essential on $X^{S_i} \ni x$, by the definition of the Choquet integral and monotonicity of $\nu$ it follows
  that $\nu$ only takes values 0 and~1.
\end{proof}

\begin{lemma}
  Let the conditions of Theorem \ref{theo:c3-repr} hold and let there be only one essential variable on each $X^{S_a}$.
  \begin{itemize}
  \item $\cgeq$ can be represented by $p^{{\cal A}{\cal B}}$;
  \item If the essential variable on every $X^{S_i}$ is the $\R$-minimal one, then $\cgeq$ can be represented by MIN;
  \item If the essential variable on every $X^{S_i}$ is the $\R$-maximal one, then $\cgeq$ can be represented by MAX;
  \item If the essential variable on every $X^{S_i}$ is the $\R$-$k$-minimal one, then $\cgeq$ can be represented by $OS-k$.
  \end{itemize}
\end{lemma}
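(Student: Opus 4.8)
The plan is to leverage the previously established results rather than reconstruct the representation from scratch. By the preceding lemma (Lemma~\ref{lm:ord-01-cap}), the hypotheses already guarantee that the capacity $\nu$ is a $0$--$1$ capacity, and by Theorem~\ref{theo:c3-repr} the preference $\cgeq$ is represented by the Choquet integral $C(\nu,\cdot)$ with respect to this $\nu$. Hence the first bullet would follow \emph{immediately} by invoking Theorem~\ref{theo:choquet-eqiv-LP} of \citet{murofushi1993some}: a Choquet integral with respect to a $0$--$1$ capacity is exactly a lattice polynomial function. So the representation by some $p^{{\cal A}{\cal B}}$ is essentially for free once Lemma~\ref{lm:ord-01-cap} is in hand; the only thing to spell out is that the value functions $\phi_i$ coincide (up to the usual normalization) with the $f_i$ constructed on the single-essential-variable pieces $X^{S_a}$.

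For the three remaining bullets the strategy is to identify \emph{which} lattice polynomial one obtains, by reading off the structure of the $0$--$1$ capacity from the position of the essential variable on each cell $X^{S_a}$. The key computational fact is the formula $C(\nu,x) = f_i(x_i)$ from the single-essential-variable construction (Section~\ref{sec:case-with-single}), where $i$ is the variable essential on the cell containing $x$. I would first translate the qualitative hypothesis ``the essential variable is the $\R$-minimal one'' into a statement about the value of the Choquet integral at each $x$: if on every cell the integral returns the value attached to the smallest coordinate (in the common scale induced by $\R$), then $C(\nu,x) = \bigwedge_{i\in N}\phi_i(x_i)$ pointwise, which is precisely the MIN representation. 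The MAX and $OS_k$ cases are handled symmetrically, matching $C(\nu,x)=\bigvee_{i\in N}\phi_i(x_i)$ and $C(\nu,x)=\phi_{(k)}(x_{(k)})$ respectively; here the CNF/DNF expressions for $OS_k$ given earlier make the bookkeeping transparent.

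The main obstacle, and the step I would treat with the most care, is making precise what ``the $\R$-minimal (maximal, $k$-minimal) essential variable'' means across different cells and verifying that this \emph{local} information glues into a single \emph{global} order statistic. The relation $\R$ and the value functions $\phi_i$ are defined via the commensurability structure of \citep{timonin2016axiomatization}, so I must check that the $\phi_i$ can be chosen on a common scale so that comparing $\R$-ranks of coordinates is the same as comparing the values $\phi_i(x_i)$ — otherwise ``minimal coordinate'' and ``minimal $\phi$-value'' need not agree. Concretely, the hard part is confirming that the essential variable selected by the construction on the cell $X^{S_a}$ containing $x$ is indeed the one whose $\phi$-value realizes the minimum (resp.\ maximum, $k$th smallest) among all $\phi_j(x_j)$, uniformly over all $x$. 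Once this compatibility between the $\R$-ordering of essential variables and the numerical ordering of the $\phi_i$ is established, each bullet reduces to substituting the appropriate selector into $C(\nu,x)=f_i(x_i)$ and recognizing the resulting pointwise formula as MIN, MAX, or $OS_k$.
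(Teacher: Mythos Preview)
Your plan matches the paper's proof almost exactly: the first bullet is obtained by combining Lemma~\ref{lm:ord-01-cap} with Theorem~\ref{theo:choquet-eqiv-LP}, and the remaining bullets are argued ``by construction'' from the identity $C(\nu,x)=f_i(x_i)$ on each $X^{S_a}$. The one place where you and the paper diverge is precisely the step you flag as the main obstacle---the compatibility of the $\R$-ranking of the essential coordinate with the numerical ordering of the $\phi_i(x_i)$. The paper does not verify this compatibility by hand; instead it invokes the uniqueness properties of the representation in the ordinal case (Theorem~\ref{theo:c3-uniqueness}), together with axioms \textbf{A3} and \textbf{A7}, to argue that when $\S$ is incomplete there is only one $\R$-ordering consistent with the single-essential-variable hypothesis, and hence the capacity and the common scale are pinned down. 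So rather than attempting a direct cell-by-cell gluing argument, you should simply cite the uniqueness theorem: it is exactly the tool that converts your local information into the global selector you need.
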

\begin{proof}
  The first statement follows from Theorem \ref{theo:choquet-eqiv-LP}. Other follow by construction and from the uniqueness properties of
  the representation \eqref{eq:c3-repr} in the ordinal case (see Theorem \ref{theo:c3-uniqueness}). If $\S$ ordering is incomplete, then only one
  $\R$ ordering can exists which does not contradict \textbf{A3,A7} and the condition that only one variable is essential on every
  $X^{S_a}$. This follows from the uniqueness of the capacity and the uniqueness properties of the value functions.
\end{proof}

\subsection{Cardinal models}
\label{sec:cardinal-models}

The particular cases of the Choquet integral in the case of cardinal value functions are related to the convexity of the capacity. We give a
characterization of the convex capacity (the concave case  is easily obtainable by reversing the preference). Note that in the two-dimensional
case, the class of the  Choquet integrals with respect to convex capacities coincides with the class of Gilboa--Schmeidler maximin
models. In the general case of $n$ dimensions, every Choquet integral with respect to a convex capacity is a Gilboa--Schmeidler model -- the
integral is a minimum of integrals with respect to probability distributions from the capacity's core \citep{gilboa1994additive} -- but not
other way round.

To our  knowledge, this is the first result which characterizes convexity of a capacity using only the primitives of $\cgeq$ and works
in ordinal or mixed as well as purely cardinal cases, i.e. it is suitable for situations when standard sequences are not available.

\begin{theorem}
  Let conditions \textbf{A1--A9} and structural assumptions hold. Then, we have
  \begin{description}
  \item[A10 -- Convexity] For all $i,j \in N$ and for all $a_i,b_i, c_i, d_i \in X_i$, $p_j,q_j,r_j,s_j \in X_j$, and all
    $z_{-ij} \in X_{-ij}$ we have
    \begin{equation}
      \left.
        \begin{aligned}
          a_ip_jz_{-ij} & \sim b_iq_jz_{-ij}\\
          a_ir_jz_{-ij} & \sim b_is_jz_{-ij}\\
          c_ip_jz_{-ij} & \sim d_iq_jz_{-ij}  \\
          d_i & \cgeq_i c_i \\
          r_j & \cgeq_j s_j
        \end{aligned}
      \right\} \Rightarrow c_ir_jz_{-ij} \cgeq d_is_jz_{-ij},
    \end{equation}
provided $j \R i$ at $a_ip_jz_{-ij}, b_iq_jz_{-ij}, a_ir_jz_{-ij}, \sim b_is_jz_{-ij}, c_ip_jz_{-ij},d_iq_jz_{-ij} $ and $i \R j$ at
$c_ir_jz_{-ij}$ and $d_is_jz_{-ij}$,
  \end{description}
  if and only if $\nu$ is a convex capacity.
\end{theorem}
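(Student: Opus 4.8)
The plan is to push everything through the Choquet representation supplied by Theorem~\ref{theo:c3-repr}, so that each preference statement in \textbf{A10} becomes a linear (in)equality in the value levels $\phi_i(\cdot),\phi_j(\cdot)$ whose coefficients are \emph{marginal} capacity increments. Fix the pair $i,j$ and a ``top set'' $T\subseteq N\setminus\{i,j\}$, and choose $z_{-ij}$ so that the coordinates in $T$ carry levels above, and those in $N\setminus(T\cup\{i,j\})$ below, the common range of the levels $\phi_i(a_i),\ldots,\phi_j(s_j)$. Then the six premise points sit in the single comonotone cell in which $j$ is ranked above $i$ (the $j\R i$ clause, with $i$ the lower variable), and the two conclusion points sit in the cell in which $i$ is ranked above $j$. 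On these cells the integral is affine, and only four coefficients matter: $w_i^{<}=\nu(T\cup\{i,j\})-\nu(T\cup\{j\})$ and $w_j^{>}=\nu(T\cup\{j\})-\nu(T)$ (where $j$ is on top), and $w_j^{<}=\nu(T\cup\{i,j\})-\nu(T\cup\{i\})$ and $w_i^{>}=\nu(T\cup\{i\})-\nu(T)$ (where $i$ is on top).

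First I would translate the three indifferences. Cancelling the common $z$-contribution, the first gives $w_i^{<}[\phi_i(a_i)-\phi_i(b_i)]=w_j^{>}[\phi_j(q_j)-\phi_j(p_j)]$, the second equates the same left-hand quantity to $w_j^{>}[\phi_j(s_j)-\phi_j(r_j)]$, and the third gives $w_i^{<}[\phi_i(c_i)-\phi_i(d_i)]=w_j^{>}[\phi_j(q_j)-\phi_j(p_j)]$. These collapse to the two ``equal-spacing'' identities $\phi_i(a_i)-\phi_i(b_i)=\phi_i(c_i)-\phi_i(d_i)$ and $\phi_j(q_j)-\phi_j(p_j)=\phi_j(s_j)-\phi_j(r_j)$, plus the single coupling $w_i^{<}\,\Delta_i=w_j^{>}\,\Delta_j$, where $\Delta_i:=\phi_i(c_i)-\phi_i(d_i)\le 0$ by $d_i\cgeq_i c_i$ and $\Delta_j:=\phi_j(s_j)-\phi_j(r_j)\le 0$ by $r_j\cgeq_j s_j$. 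Reading the conclusion $c_ir_jz_{-ij}\cgeq d_is_jz_{-ij}$ on the ``$i$ on top'' cell as $w_i^{>}\Delta_i\ge w_j^{<}\Delta_j$ and eliminating $\Delta_j$ through the coupling, the entire implication reduces to the purely capacity-theoretic product inequality $w_i^{>}w_j^{>}\le w_i^{<}w_j^{<}$.

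The linchpin is that this product inequality is \emph{equivalent} to the single supermodularity instance $\nu(T\cup\{i,j\})+\nu(T)\ge\nu(T\cup\{i\})+\nu(T\cup\{j\})$, not merely implied by it. Writing $m=\nu(T\cup\{i,j\})-\nu(T)\ge 0$, $x=w_i^{>}$, $y=w_j^{>}$, one has $w_i^{<}=m-y$ and $w_j^{<}=m-x$, so $w_i^{>}w_j^{>}\le w_i^{<}w_j^{<}$ rearranges to $0\le m(m-x-y)$, i.e.\ to $m\ge x+y$, which is exactly the stated supermodularity. Hence convexity of $\nu$ yields the conclusion for every admissible configuration (the ``only if'' direction); conversely, if \textbf{A10} holds, then exhibiting one configuration for each pair $i,j$ and each $T$ forces the product inequality, hence supermodularity, for every $T$, i.e.\ full convexity (the ``if'' direction).

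I expect the real work to be in the ``if'' direction: one must \emph{construct}, for arbitrary $i,j$ and arbitrary $T$, levels $a_i,\ldots,s_j$ and a profile $z_{-ij}$ that simultaneously (i) realise the prescribed top set $T$ and put the eight points into the correct comonotone cells so that the $\R$-clauses of \textbf{A10} hold, (ii) solve all three indifferences, and (iii) respect the orientations $d_i\cgeq_i c_i$, $r_j\cgeq_j s_j$ with $\Delta_i,\Delta_j$ nonzero. This is precisely where \textbf{A2} and the structural (solvability/richness) assumptions enter: \textbf{A2} furnishes the coordinate weak orders $\cgeq_i,\cgeq_j$ needed to state the orientation premises, while the structural assumptions supply enough levels on $X_i,X_j$ to solve the indifference equations and separate $T$ cleanly. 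The only remaining care is bookkeeping for the degenerate cells $w_j^{>}=0$ or $w_i^{<}=0$, where the coupling and equal-spacing identities must be read as limits; these I would dispatch by a direct monotonicity argument rather than through the product form.
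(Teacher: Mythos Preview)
Your reduction of the axiom, \emph{on a clean top--set configuration}, to the product inequality $w_i^{>}w_j^{>}\le w_i^{<}w_j^{<}$ and its equivalence with the single supermodularity instance $\nu(T\cup\{i,j\})+\nu(T)\ge\nu(T\cup\{i\})+\nu(T\cup\{j\})$ is correct and elegant; the sufficiency direction (A10 $\Rightarrow$ convexity) goes through exactly as you outline, and is in spirit the same as the paper's argument (there too one \emph{chooses} $z_{-ij}$ so that selected coordinates sit below $\phi_i,\phi_j$, and peels them off one by one), only you use the supermodularity form of convexity while the paper uses the M{\"o}bius criterion $\sum_{i,j\in B\subset A}m(B)\ge 0$.

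The gap is in the necessity direction (convexity $\Rightarrow$ A10). The axiom quantifies over \emph{all} $z_{-ij}$, and the $\R$-proviso only pins down the relative rank of $i$ versus $j$ at each of the eight points; it says nothing about how the $z_k$ sit relative to the $i,j$ levels. Your claim that ``the six premise points sit in a single comonotone cell'' determined by a top set $T$ holds only when the $\phi_k(z_k)$ are cleanly separated from the range spanned by $\phi_i(a_i),\ldots,\phi_j(s_j)$. In general the $z_k$ interleave that range, different premise points fall into different cells, and the $z$-contribution does \emph{not} cancel: the weights on $\phi_i,\phi_j$ are no longer the same four constants $w_i^{<},w_j^{>},w_i^{>},w_j^{<}$ across the eight points, so your coupling identity and the product reduction both collapse. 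The paper avoids this by never passing to a single cell: it writes each Choquet value in M{\"o}bius form, sums the three indifferences and the (negated) conclusion, and is left with an expression $\sum_{A\ni i,j}m(A)\bigl(\min_{k\in A\setminus ij}[\phi_i(d_i),\phi_k(z_k)]-\min_{k\in A\setminus ij}[\phi_i(c_i),\phi_k(z_k)]\bigr)$ (plus the symmetric $j$-term), whose nonnegativity is then proved for \emph{arbitrary} $z$ by an inductive sweep that introduces the $z_k$ in decreasing order of $\phi_k(z_k)$ and invokes the M{\"o}bius convexity criterion at each step. To rescue your approach for necessity you would need a comparable device that tolerates interleaving---either reproduce the M{\"o}bius sweep, or argue cell-by-cell with varying top sets and show the cross-cell terms combine correctly; as written, the proof only covers the self-selected clean configurations.
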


\begin{proof}
  Since conditions \textbf{A1--A9} and structural assumptions hold, there exists a Choquet integral representation of $\cgeq$. We can use it
  to prove the statement of the theorem. A capacity is convex if for all $i,j \in N, A \subset N, i \neq j$ we have \citep{chateauneuf2008some}:
  \begin{equation}
    \sum _{i,j \in B \subset A} m(B) \geq 0.
  \end{equation}

  First, let $c_ir_jz_{-ij} \clt d_is_jz_{-ij}$. We can write the conditions above using the M{\"o}bius form of the Choquet integral. All
  subsets of $N$ can be separated into four groups:
  \begin{itemize}
  \item $A: A \ni i, A \not \ni j$
  \item $A: A \ni j, A \not \ni i$
  \item $A: A \ni i, A \ni j$
  \item $A: A \not \ni i, A \not \ni j$.
  \end{itemize}
Hence, the value function for each of the points in the axiom can be written as follows. For example, for $a_ip_jz_{-ij}$ (note that we have
merged $A: A \ni i, A \not \ni j$ and $A: A \ni i, A \ni j$ groups by virtue of $j \R i$ at $a_ip_jz_{-ij}$):
    \begin{equation}
          \sum _{A \ni i} m(A) \min _{k \in A-ij} [f_i(a_i),f_k(z_k)] + \sum _{\substack{A \ni j \\ A \not \ni i}} m(A) \min _{k \in A-ij}
            [f_i(p_j),f_k(z_k)]  + \sum _{A \ni i,j} m(A) \min _{k \in A-ij} [f_k(z_k)].
    \end{equation}
Writing down all four conditions like this and after some trivial algebraic transformations which we omit in the name of readability (sum
first two conditions, add to the sum of the last two conditions and simplify), we get

\begin{dmath}
  \sum _{A \ni i,j}m(A)\left( \min _{k \in A-ij} [f_i(d_i),f_k(z_k)] - \min _{k \in A-ij} [f_i(c_i),f_k(z_k)] \right) + \sum _{A \ni
    i,j}m(A)\left( \min _{k \in A-ij} [f_j(r_j),f_k(z_k)] - \min _{k \in A-ij} [f_j(s_j),f_k(z_k)] \right) < 0.
\end{dmath}

We will show that both summands of the above expression are non-negative. Consider
\begin{equation}
  \sum _{A \ni i,j}m(A)\left( \min _{k \in A-ij} [f_i(d_i),f_k(z_k)] - \min _{k \in A-ij} [f_i(c_i),f_k(z_k)] \right).
\end{equation}
The difference $f_i(d_i),f_k(z_k)] - \min _{k \in A-ij} [f_i(c_i),f_k(z_k)$ is
\begin{itemize}
\item always non-negative, as $d_i \cgeq_i c_i$
\item maximal, when $A = \{i,j\}$
\item non-increasing as $A$ grows larger.
\end{itemize}
Note that, by convexity, $m(\{i,j\}) \geq 0$. Hence,
\begin{equation}
m(\{i,j\})\left( \min _{k \in \emptyset} [f_i(d_i),f_k(z_k)] - \min _{k \in \emptyset} [f_i(c_i),f_k(z_k)] \right)
= m(\{i,j\})\left( f_i(d_i) - f_i(c_i) \right) \geq 0.
\end{equation}
Next, find a maximal $f_{k^1}(z_{k^1}), k^1 \in N \setminus i,j$. Note that in the above expression we will only have one element
$\min _{k \in A-ij} [f_i(d_i),f_k(z_k)] - \min _{k \in A-ij} [f_i(c_i),f_k(z_k)]$ where $k^1$ is not redundant (since it's maximal). We get
\begin{equation}
  \begin{aligned}
    &m(\{i,j\})\left( f_i(d_i) - f_i(c_i) \right) + m(\{i,j,k^1\})\left( \min [f_i(d_i),f_{k^1}(z_{k^1})] - \min [f_i(c_i),f_{k^1}(z_{k^1})]
    \right)  \\
    &\geq [m(\{i,j\}) + m(\{i,j,k^1\})]\left( \min [f_i(d_i),f_{k^1}(z_{k^1})] - \min [f_i(c_i),f_{k^1}(z_{k^1})] \right) \geq 0.
  \end{aligned}
\end{equation}
The first inequality is since $m(\{i,j\}) \geq 0$ and the second is since $m(\{i,j\}) + m(\{i,j,k^1\}) \geq 0$, by convexity criterion. Now
pick the second largest $f_{k^2}(z_{k^2}), k^2 \in N \setminus i,j,k^1$. Using the same arguments we get
\begin{dmath}
    m(\{i,j\})\left( f_i(d_i) - f_i(c_i) \right) + m(\{i,j,k^1\})\left( \min [f_i(d_i),f_{k^1}(z_{k^1})] - \min [f_i(c_i),f_{k^1}(z_{k^1})]
    \right)  
    + m(\{i,j,k^2\})\left( \min [f_i(d_i),f_{k^2}(z_{k^2})] - \min [f_i(c_i),f_{k^2}(z_{k^2})] \right) + m(\{i,j,k^1,k^2\})\left( \min
      [f_i(d_i),f_{k^2}(z_{k^2})] - \min [f_i(c_i),f_{k^2}(z_{k^2})] \right)  
    \geq [m(\{i,j\}) + m(\{i,j,k^1\}) + m(\{i,j,k^2\}) + m(\{i,j,k^1,k^2\})] \left( \min [f_i(d_i),f_{k^2}(z_{k^2})] - \min [f_i(c_i),f_{k^2}(z_{k^2})] \right) \geq 0.
\end{dmath}
Continuing  like this we can add more and more elements  and eventually conclude that
\begin{equation}
  \sum _{A \ni i,j}m(A)\left( \min _{k \in A-ij} [f_i(d_i),f_k(z_k)] - \min _{k \in A-ij} [f_i(c_i),f_k(z_k)] \right) \geq 0.
\end{equation}
Similarly,
\begin{equation}
  \sum _{A \ni i,j}m(A)\left( \min _{k \in A-ij} [f_j(r_j),f_k(z_k)] - \min _{k \in A-ij} [f_j(s_j),f_k(z_k)] \right) \geq 0.
\end{equation}
Hence we have shown that the axiom necessarily holds if the capacity is convex. To show the inverse, assume that the axiom holds on
$X$. Writing down conditions of the axiom and simplifying as before, we get that everywhere on $X$ we should have
\begin{dmath} \sum _{A \ni i,j}m(A)\left( \min _{k \in A-ij} [f_i(d_i),f_k(z_k)] + \min _{k \in A-ij} [f_j(r_j),f_k(z_k)]
    \right) \geq \sum _{A \ni i,j}m(A)\left( \min _{k \in A-ij} [f_i(c_i),f_k(z_k)] - \min _{k \in A-ij} [f_j(s_j),f_k(z_k)] \right).
\end{dmath}
Assume $i,j$ interact. If this is not the case, the convexity criterion is trivially satisfied for $i,j$ as all $m(A)$ in the expression
above are 0 (see Lemma \ref{lm:null-mobius}). Assume also all variables are in the same interaction group. If this is not the case, $m(A)$
for $A$ containing variables not in the same interaction group as $i,j$ are again 0, and can be discarded.

With this assumption made, we can now pick points, such that $f_i(\cdot)$ and $f_j(\cdot)$ are the smallest value functions. Hence, the
above expression reduces to
\begin{equation}
  [f_i(d_i) + f_j(r_j)] \sum _{A \ni i,j}m(A) \geq [f_i(c_i) + f_j(s_j)] \sum _{A \ni i,j}m(A).
\end{equation}
Since $[f_i(d_i) + f_j(r_j)] \geq [f_i(c_i) + f_j(s_j)]$, we conclude that $\sum _{i,j \in A \subset N} m(A) \geq 0$.

Now pick points such that only $f_{k^1}(z_{k^1})$ is less than $f_i(\cdot)$ and $f_j(\cdot)$. We get
\begin{dmath}
  [f_i(d_i) + f_j(r_j)] \sum _{\substack{A \ni i,j \\ A \not \ni k^1}}m(A) + 2 f_{k_1}(z_{k^1})\sum _{A \ni i,j,k^1}m(A)  \geq [f_i(c_i) +
  f_j(s_j)] \sum _{\substack{A \ni i,j \\ A \not \ni k^1}}m(A) + 2 f_{k_1}(z_{k^1})\sum _{A \ni i,j,k^1}m(A),
\end{dmath}
or
\begin{equation}
  [f_i(d_i) + f_j(r_j)] \sum _{\substack{A \ni i,j \\ A \not \ni k^1}}m(A) \geq [f_i(c_i) + f_j(s_j)] \sum _{\substack{A \ni i,j \\ A \not \ni k^1}}m(A).
\end{equation}
From this we conclude that $\sum _{i,j \in A \subset N \setminus k^1} m(A) \geq 0$.

Continuing like this we can check all necessary sums for the convexity condition and for all pairs $i,j$. So, we have shown that the
capacity is convex provided the axiom holds.
\end{proof}


\section{Learning the Choquet integral}
\label{sec:learn-choq-integr}

\emph{Learning} the model means deriving model parameters from data. This step is essential in any practical application, and it is
normally performed towards at least one of  two goals: analysis of the data, by means of interpreting model parameters, or
prediction -- in other words, ``training'' the model on some dataset to use it with some other data.

It is well known that the quality of fit of a model depends on the model complexity and the available data. Learning a very complex
model using only a few data points would not achieve satisfactory results, just as using a very simple model might conceal some important
properties of a large and complicated dataset.

An important aspect of the learning process is its computational viability. Indeed, from the practical perspective, using a simpler but
faster model which is capable of delivering approximate answers in real-time fashion, might be preferable to employing a more precise but
also more expensive model which takes hours or days to be built.

In this section we look at various aspects of the Choquet integral learning and emphasize the consequences which our axiomatization results have
for this process. We start by an overview of the current learning techniques and then look at difficulties which arise when learning  the
Choquet integral model in the full generality. 

To learn the Choquet integral we need to derive two parts of the model from data:
\begin{itemize}
\item value functions $f_i: X_i \rightarrow R$, and
\item capacity $\nu$.
\end{itemize}
The following sections provide details on each of these components.

\subsection{Learning the capacity}
\label{sec:learning-capacity}

The majority of the theoretical and applied literature so far has concentrated on learning (``identification'') of the capacity \emph{only}. In
this approach, the value functions are assumed as given. Normally, for numerical coordinates $f_i(x_i) = x_i$ are taken (probably after some
rescaling). For categorical data, sometimes arbitrary numerical labels are used (see e.g. AHP), although the theoretical problems of this
approach are quite apparent.

A good review of the existing methods of capacity construction can be found in \cite{grabisch2008review}. In the majority of cases, the
learning process is based on minimization of some loss function (MSE, MAE, or similar), or on finding the extremum of some meaningful
expression, such as variance or entropy.

Typically, data is used to formulate constraints on the space of possible parameters (i.e. capacities). For example, if $x \cgeq y$, then
$\nu$ must be such that $C(\nu,f(x)) \geq C(\nu,f(y))$ (remember the value functions are considered known). Since the integral is a linear
function of the capacity, we get linear constraints. Eventually the polyhedron of all possible capacities is defined by the following data:

\begin{description}
\item[Learning set.] Pairwise preferences between elements of the ``learning set'' $X$.
\item[Criteria importance.] The most intuitive way to describe a multicriteria model qualitatively is, perhaps, to define the relative
  weights of its components. The process is semantically similar to that for additive models; however, due to non-additivity we can not rely
  only on values for singletons any more, but must also take into account all other subsets of $N$.
\item[Criteria interaction.] A more complicated type of knowledge about criteria is the character of their combined
  influence. In particular, criteria can complement each other, which is also known under the name of positive synergy,
  or else be redundant (resp. negative synergy).
\item[Veto and favour criteria.] Sometimes the model also includes criteria of an immense importance, so that the
  alternatives having low valuations on them will also inevitably receive low overall judgements. This kind of criterion
  is usually called ``veto'' in the literature. The opposite situation is having a criterion (or criteria) such that a
  high value on them automatically justifies a high overall valuation. Such elements are called ``favour''.
\item[Complexity controls.] Often it is deemed that interactions in groups larger than $k$ can be ignored to improve the computational
  properties of the model. The mechanism which allows us to achieve this is called \emph{$k$-additivity}. Most frequently, 2-additive
  capacities are used.
\end{description}

The following indices were originally applied for behavioral analysis of non-additive measures. However, they also
allow us to formulate and solve the inverse problem of capacity identification (see \citealp{marichal2000determination} and
references therein).
\begin{definition}[\citealp{shapley1953value}]
The Shapley value is an additive measure $\phi_{\nu}:2^N \rightarrow [0,1]$ defined as
  \begin{equation}
    \phi_\nu(i) = \sum \limits _{T \subset N \setminus i}
    \frac{(\vert N \vert - \vert T \vert - 1)!\vert T \vert !}{\vert N \vert !}
    [\nu(T \cup i) - \nu(T)].
\end{equation}
It can also be expressed via the M{\"o}bius transform coefficients:
\begin{equation}
\phi_m(i) = \sum \limits _{ T \subset N \setminus i }
			   \frac{1}{\vert T \vert + 1} m(T \cup i).
\end{equation}
\end{definition}

The semantic interpretation given to the Shapley value of a criterion $i \in N$ in the literature is the relative importance of the said
criterion in the decision problem. More formally, it amounts to the average marginal input of that criterion to all subsets of $N$. Being a
probability measure, the Shapley value sums up to 1 over all $i \in N$. Table \ref{tab:shapley} demonstrates how the Shapley value can be
used in capacity identification problems ($ \delta_{SH} $ is some small value -- the indifference coefficient).
\begin{table}[h!]
  \centering
  \caption{Criteria importance modelling}
\label{tab:shapley}
  \begin{tabular}{l  r }
 \toprule
    The criterion $ i $ is more important than $ j $ & $\phi_\nu(i) - \phi_\nu(j) \geqslant \delta_{SH}$ \\
\midrule
    Criteria $ i $ and $ j $ are equally important & $ -\delta_{SH} \geqslant \phi_\nu(i) - \phi_\nu(j) \leqslant \delta_{SH}$ \\
\bottomrule
  \end{tabular}\\
\end{table}
Intuition about the relative importance of a criterion can be expressed as $\phi_\nu(i) = k $ or $\phi_\nu(i) \in
[k^l, k^u] $, although, just like in the additive case, doing so is not strictly sensible.

The measure of criteria interaction character and strength was introduced by \cite{murofushi1993techniques} for pairs
of elements and later generalized by \citet{grabisch1997k}.
\begin{definition}
The interaction index of a subset $T \subset N$ is defined as
  \begin{equation}
I_\nu(T) = \sum \limits _{k=0}^{\vert N \vert -\vert T \vert} \xi_k^{\vert T \vert}
	  \sum \limits _{K \subset Z \setminus T, \vert K \vert = k}
	  \sum \limits _{L \subset T} (-1)^{\vert T \vert - \vert L \vert} \nu(L \cup K),
\end{equation}
where
\begin{equation}
\xi_k^p = \frac{(\vert N \vert - k - p)!k!}{(\vert N \vert -p+1)!}.
\end{equation}
\end{definition}
For practical problems we are particularly interested in the index expression for pairs $\{i, j\}$:
\begin{equation}
I_\nu( ij)
		= \sum \limits _{T \subset N \setminus ij}
		\xi_{\vert T \vert}^2\left[\nu(T\cup ij) - \nu(T \cup i) - \nu(T \cup j) + \nu(T)\right], 				
\end{equation}
or, when expressed with the M{\"o}bius transform coefficients:
\begin{equation}
I_m(ij)
		= \sum \limits _{T \subset N \setminus ij}
		\frac{1}{\vert T \vert + 1} m(T \cup ij).
\end{equation}

The interaction index for singletons coincides with the Shapley value. The index can be interpreted as the degree of
interaction between elements in the set $T$. Its values lie in the interval $ [-1;1] $, with 1 corresponding to the
maximal positive interaction (complementarity), and $-$1, accordingly, to the maximal negative interaction (redundancy). Table
\ref{tab:interaction} summarizes index usage in identification problems.

\begin{table}[h!]
  \centering
  \caption{Modelling criteria interactions}
\label{tab:interaction}
  \begin{tabular}{l r}
\toprule
    Criteria $ i $ and $ j $ complement each other & $ 0 \leqslant I_\nu(i,j) \leqslant 1  $ \\
\midrule
    Criteria $ i $ and $ j $ complement \\ each other stronger than $ k $ and $ l $  & $ I_\nu(i,j) - I_\nu(k,l) \geqslant \delta_{I} $ \\
\midrule
    Criteria $ i $ and $ j $ interact \\ in a way similar to $ k $ and $ l $ & $ -\delta_{I} \geqslant I_\nu(i,j) - I_\nu(k,l) \leqslant \delta_{I}  $ \\
\bottomrule
  \end{tabular}\\
\end{table}

To model ``veto'' and ``favour'' criteria we can proceed in the following way \citep{grabisch1997veto}. If some criterion $ i $ is a ``veto''
one, then
\begin{equation}
  \nu(A) = 0 \qquad \forall A \nsupseteq i.
\end{equation}
Else, if some criterion $ i $ is a ``favour'' one, then
\begin{equation}
  \nu(A) = 1 \qquad \forall A \supseteq i.
\end{equation}

Finally, if the problem allows us to employ a learning set, the DM might be asked to express his preferences with regard to
its elements. In an identification problem this corresponds to linear constraints (since the integral is linear in $\nu$)
outlined in Table \ref{tab:learning}.

\begin{table}[h!]
  \centering
  \caption{Preferences over learning set objects}
\label{tab:learning}
  \begin{tabular}{l  r }
 \toprule
    The alternative $ z_1 $ is preferred to $ z_2 $ & $C(\nu,f(z_1)) - C(\nu,f(z_2)) \geqslant \delta_{LS}$ \\
\midrule
    The DM is indifferent between $ z_1 $ and $ z_2 $  & $ -\delta_{LS} \geqslant C(\nu,f(z_1)) - C(\nu,f(z_2)) \leqslant \delta_{LS}$ \\
\bottomrule
  \end{tabular}\\
\end{table}

Having the available information expressed as a set of linear constraints we obtain the set $ \mathcal{U} $.  Summing up the results of the
previous section, $ \mathcal{U} $ can be written down as shown in equation (\ref{eq:uncset}).
\begin{figure}[h!]
  \centering
  \begin{equation}\label{eq:uncset}
    \begin{aligned}
      \mathcal{U}: \\
      \\
      &\textbf{Information from the DM} \\
      &\phi_\nu(i) - \phi_\nu(j) \geqslant \delta_{SH}, \quad  i,j \in 1,\ldots,n \\
      &\dots \\
      &-\delta_{SH} \geqslant \phi_\nu(i) - \phi_\nu(j) \leqslant \delta_{SH}, \quad  i,j \in 1,\ldots,n \\
      &\dots \\
      &I_\nu(i,j) - I_\nu(k,l) \geqslant \delta_{I}, \quad  i,j \in 1,\ldots,n  \\
      &\dots \\
      &-\delta_{I} \geqslant I_\nu(i,j) - I_\nu(k,l) \leqslant \delta_{I}, \quad  i,j \in 1,\ldots,n  \\
      &\dots \\
      &C(\nu,f(z_i)) - C(\nu,f(z_j)) \geqslant \delta_{LS}, \quad  i,j \in 1,\ldots,n  \\
      &\dots\\
      &-\delta_{LS} \geqslant C(\nu,f(z_i)) - C(\nu,f(z_j)) \leqslant \delta_{LS}, \quad  i,j \in 1,\ldots,n  \\
      &\dots\\
      & \nu(A) = 1, \forall A \supset \text{favour criteria} \\
      & \nu(A) = 0, \forall A \not\supset \text{veto criteria} \\
      &\textbf{Technical constraints} \\
      &\nu(\emptyset) = 0 \\
      &\nu(N) = 1 \\
      &\nu(B) \geq \nu(A) \quad \forall B \subset A \subset N \\
      &\textbf{Additional constraints} \\
      &k- \text{additivity. Not always applicable.} \\
    \end{aligned}
  \end{equation}
  \caption{Encoding the information as constraints on the set of capacities}
\end{figure}

Notably, all constraints are linear, and thus the set ${\cal U}$ is a polyhedron in $\mathbb{R}^{2^n}_+$.  Its dimension can be reduced to
$2^n-2$ if we exclude the $\emptyset$ and $N$ coordinates, which have fixed values. It can be reduced even further by using $k$-additive
capacities which, however, is not always possible.  By solving the feasibility problem
\begin{equation}
  \begin{aligned}
    &\min_{\nu } 1 \\
    \text{s.t. } & \nu \in \mathcal{U},
  \end{aligned}
\end{equation}
we can check if there exists at least one capacity compliant with the given data. If such capacity cannot be found, the following problem can
be solved:
\begin{equation}
\begin{aligned}
    &\min_{\nu } {\cal L}({\cal U}) \\
    \text{s.t. } & \nu \text{ is a capacity},
  \end{aligned}
\end{equation}
where ${\cal L}({\cal U})$ is some loss function of the data (e.g. the number of preference reversals). The loss function, whether an
error-based one or some other as mentioned above, is typically a convex function, so the optimization problem is quite efficient. If the
model is built for forecasting purposes, regularization techniques can also be used
\citep{tehrani2013ordinal,tehrani2012learning,tehrani2011learning,tehrani2011choquistic}. Additionally, identification problems can have
more than one solution, which induces the problem discussed below.

\subsection{Learning the value functions}
\label{sec:learn-value-funct}

Learning the value functions on the other hand is a different matter. Let us consider first how the process is performed in the additive value
function model. Recall that the model has the following form:
\begin{equation}
  x \cgeq y \iff \sum _{i=1} ^n f_i(x_i) \geq \sum _{i=1} ^n f_i(y_i).
\end{equation}
The data in such a learning problem is typically given as pairwise preferences for some points from the set $X$. The resulting problem is then
an LP, because additive value functions are linear with respect to each $f_i$ that we are aiming to learn. A well-known family of learning
methods related to learning of the additive value models are called the ``UTA methods'' \citep{siskos2005uta}. The value functions are
assumed to be linear interpolations of the learning points (i.e. they are piecewise linear), but sometimes polynomial or spline-based
versions are used \citep{sobrie2016uta}. Still, the process remains computationally efficient.

Note that the value functions learned in this manner do not provide any ``qualitative'' information about the data to the analyst. They can
be used for forecasting purposes, but due to the restrictions of the additive model, no statements about the ``importance'' of criteria or
similar notions can be made. In contrast, learning value functions and the capacity in the Choquet integral is valuable even if the value
functions are learned in a non-parametric manner. Indeed, it is the capacity that is capable of showing the qualitative relations between
criteria of the multidimensional problem, as is to some extent attested by the majority of the existing practical applications. However,
this process has two complications: the computational complexity and the confounding of the capacity and the value functions.

As mentioned above, the vast majority of the theoretical and practical contributions to the literature assume the existence of value
functions, or what is the same, of a common scale on which all attributes of the problem are being measured. This is clearly a very strong
assumption, but it also leads to a significant simplification of the learning process. Indeed, in this case we only need to learn the
capacity, which is generally a convex minimization problem. In contrast, when learning \emph{both} the capacity and the value functions, we
must solve a difficult non-convex optimization problem. Only a few papers have attempted to tackle this issue
\citep{angilella2004assessing,goujon2013holistic,angilella2015stochastic}, all of them offering some heuristic methods and small-scale
examples. This is not surprising. Indeed, consider the data point $x \cgeq y$ for some $x,y \in X$. In the Choquet integral model, it is
represented by the following expression: $C(\nu,f(x)) \geq C(\nu,f(y))$. Since the integral is a sum of products of elements of $\nu$ and
$f(x)$, the constraint is not linear in contrast to the case where only capacity is considered unknown. Moreover, it is generally
non-convex. Hence, the process of construction of the capacity and the value functions involves solving a non-convex optimization problem,
which is known to be computationally hard.

\subsection{Confounding of the capacity and the value functions}
\label{sec:conf-capac-value}

The second issue in the Choquet integral learning problems is the non-uniqueness of the resulting capacity. Even in cases where only
capacity is being learned, the exponential number of the coefficients ($2^n - 2$, excluding $\nu(\emptyset)$ and $\nu(N)$) means that the
task of model learning quickly becomes very difficult as the number of dimensions of the model increases. Typically a learning dataset which
is not sufficiently large does not allow the capacity to be learned in a precise way. This is a very well-known problem in  general
learning theory \citep{hullermeier2012vc} and it can be addressed by a number of methods. Among these we can mention the general regularization
approaches \citep{tehrani2013ordinal,tehrani2012learning,tehrani2011learning,tehrani2011choquistic}, but also some specialized methods which
can be applied when the model is used in particular applications, such as sorting
\citep{angilella2015stochastic,angilella2010non}. Additionally, a number of methods were developed for robust decision making with the
Choquet integral. Thus, in \cite{timonin2011robust} we proposed an algorithm for regret-minimizing optimization when the capacities are
only known to belong to a certain set, whereas \cite{benabbou2015minimax,benabbou2014incremental} looked at the problem of the robust capacity
construction using interactive data.

Axiomatization introduced in this work adds another level of complexity to the uniqueness problem. Indeed, the uniqueness results state that
meaningful and unique decomposition of the capacity and the value functions is only possible when the model exhibits sufficient levels of
non-separability. In particular, pairwise violation of $ij$-triple cancellation should be present to a sufficient extent to obtain a unique
capacity (in particular, all variables should be in the same interaction group, see Section \ref{sec:c3-uniqueness}). Thus, even an indefinite
amount of data, not containing a sufficiently rich structure of preferences, would lead to a strongly non-unique capacity. In fact, it is
easy to show that the capacity in such cases can be taken almost arbitrarily. Consider the extreme example, when there is no pairwise
interaction in the model. In this case, we have $n$ interaction groups of size 1 or, in other words, an additive value model. In the
expression $w_1f_1(x_1) + \cdots + w_nf_n(x_n)$ we can arbitrarily change the ``weights'' $w_i$ by compensating their increase or decrease
by a proportional change in $f_i$. The whole model can then be rescaled so that the weights sum up to 1. It is trivial that these
modifications do not affect the validity of the representations.

Non-uniqueness of the capacity is not necessarily a problem for prediction applications; however, qualitative conclusions, commonly made
based on capacity indices, become meaningless. For example, consider the paper of \cite{gang2012hotels}. Here, data from hotel evaluations on
the tripadvisor website is  analysed with the Choquet integral. Each hotel is reviewed based on several criteria, such as price, location,
etc. In addition, every hotel gets an overall mark, which allows the authors to construct the relation between general attractiveness of the
hotel and its particular features or their combinations. Reviewers are categorized into several social groups (``American businessmen'',
``European families'', etc). The paper shows which attributes and combinations of attributes are important for every group by finding
capacities that provide the best fit of the 2-additive Choquet integral to the corresponding dataset. Shapley values and interaction indices
of these capacities provide the required information.

From our perspective, the important point is that the evaluations are \emph{assumed} to be on the same scale. Every criterion is given from
one to five stars, and so is the global evaluation. Of course it seems not completely unreasonable to suggest that various incommensurable
notions such as ``5 minutes from the train station'' and ``very clean'' are somehow mapped onto a global ``satisfaction'' scale in the mind
of the reviewer, indeed there are many examples of such ``cross-modality'' mappings in the psychological literature (see Section
\ref{sec:interpretation-psy}). However, there is no real evidence supporting this claim, and we can also assume that stars on each dimension
signify just the ranking within the dimension itself and not across dimensions as the authors conjecture. The other consequence of such
assumption is that the scale is equispaced, in the sense that the (cardinal) difference between one and two stars is the same as between two
and three and between four and five. Apparently, this does not have to be true and often is not.

The possibility to fit not only the capacity but also the value functions resolves these methodological issues. Apparently, it should
also improve the quality of the fit. However, in cases when we assume a common scale, the lack of interaction between certain criteria is
not an issue -- we still obtain a unique capacity (see also axiomatizations in \citealp{wakker1989additive} and \citealp{schmeidler1989subjective}) and
corresponding indices, which would show a lack of interaction. In contrast, without the commensurability assumption, having even two
interaction groups would mean that we are not able to talk about ``criteria importance'' globally, but only within these groups. The problem
here is not with the tools used for capacity interpretation, in this case the Shapley value, but rather comes from the limitations of the
model per se. Unfortunately, it is not easy to see how this problem can be resolved, as it is in fact the same issue as the impossibility of
meaningfully using the notion of ``criteria weights'' in the additive model \citep{bouyssou2000evaluation} (Chapter 6). It is notable,
however, that the value of the interaction index would remain zero for any two elements from different interaction groups, no matter how we
transform the capacity!\footnote{See Lemma \ref{lm:null-mobius} and the definition of the interaction index given earlier in this chapter.}




\section{Interpretations and discussion}
\label{sec:discussion-results}

Motivation for this thesis came primarily from MCDA applications. However, our results can be also applied in several other subfields of decision theory. In this section
we discuss two of them -- the state-dependent utility and the social choice problems.

\subsection{Multicriteria decision analysis}
\label{sec:interpretation-maut}

MCDA provides perhaps the most natural context for our results. Indeed, in the multicriteria context the heterogeneity of the decision space
dimensions is natural and the insufficiency of the previous results is apparent and has been discussed in the literature multiple times
(e.g. \citealp{bouyssou2009conjoint}). We have covered many aspects of the Choquet integral usage in MCDA in the previous
chapters. An introduction and an example of a multicriteria model are given in Section \ref{sec:choq-integr-pref}, while questions of the model
learning and interpretation are discussed in Section \ref{sec:learn-choq-integr}, together with an example of a practical application.

From the theoretical perspective, in the multicriteria context our results imply that the decision maker constructs a mapping between the
elements of the criteria sets (their subsets to be precise). Some authors interpret this by saying that criteria elements sharing the same
utility values present the same level of ``satisfaction'' for the decision maker \citep{grabisch2008decade}. Technically, such statements
are meaningful, in the sense that permissible scale transformations do not render them ambiguous or incorrect, unless the representation is
additive. However, the substance of the statements such as ``$x_1$ on criterion 1 is at least as good as $x_2$ on criterion 2'' (which would
correspond to $f_1(x_1) \geq f_2(x_2)$) is not easy to grasp. Apart from the satisfaction interpretation, perhaps one could think about
workers performing various tasks within a single project. From the perspective of a project manager, achievements of various workers,
serving as criteria in this example, can be level-comparable despite being physically different, if the project has global milestones
(i.e. scale) which are mapped to certain personal milestones for every involved person. The novelty of our characterization is that this
scale is not given a priori. Instead, we only observe preferences of the project manager and infer all corresponding mappings from them. It
is also worth mentioning that value functions for any interacting pair can be seen to form a so-called Guttman scale (or a biorder)
\citep{guttman1944basis,doignon1984realizable}.

\subsection{Psychology}
\label{sec:interpretation-psy}

An interesting connection is that in psychology there exists a body of results on the so-called cross-modality matching. A large number of
studies have been conducted in this area since 1950s, with experiments related to loudness, colour, size, tone, pain, money, etc.
\citep{stevens1980cross,stevens1965cross,stevens1959cross,galanter1974cross,krantz1972theory}. \citet{kahneman2011thinking} gives the
following example: ``A girl learned to read when she was four. How tall is a man who is as tall as Julie was precocious?'' Normally, kids
start reading at around 5 or 6, so perhaps the girl is somewhat more precocious than average, although not by too much. Therefore, we could
say that the man is somewhat higher than the average 180\,cm, perhaps his height is 190 or similar.  Apparently our ability to answer this
question is based on the existence of some information about the distribution of the age when children start reading, and the distribution
of height. The information can come in a number of forms: either just a mean value (``on average kids start reading at 5'', ``an average man
is 180\,cm high''), or two absolute reference levels on both dimensions -- ``children start reading between 3 and 6'', ``men heights are in the
range of 165--205\,cm''. Finally, we can have  complete information about both distributions and pick a match based on that. It is this
information that allows us to ``map'' four years to something like 190\,cm. We can perhaps consider the probability of a certain value as the
universal scale shared by two distinct elements: ``75\% of children start reading after 4'', ``75\% of men are lower than 190\,cm'',
etc. However, as discussed above, such information is not always available, and there might be other mechanisms by which such mappings are
performed.


\subsection{State-dependent utility }
\label{sec:state-depend-util}

We will show how the traditional comonotonic-based axiomatization \emph{implies} state-independence and how our approach can be used to
construct a truly state-dependent model without making additional assumptions about correspondence between outcomes in different states.

The state-dependent utility concept, as introduced in Chapter \ref{chap:Title chapter 1} and further in Appendix \ref{chap:extended lit review},
is evoked when the nature of the state itself is of significance and it is not assumed that outcomes in different states have the same
meaning or value to the decision maker. A popular example is healthcare, where various outcomes can have major effects on the personal value
of the insurance premium \citep{karni1985decision}. One way to model this is to use different value functions for every state; moreover, we
could also consider the notion of state--prize \citep{karni1985decision,karni2016expected}, which actually takes the state-dependent model
directly to the heterogeneous product set case (dimensions are sets of ``state--prizes'').

So far the axiomatizations of the state-dependent utility models have been based on the existence of some correspondence between the outcomes in
different states \citep{karni2016expected,karni1993definition,karni1985decision,fishburn1973mixture}. In essence, this is not different from
assuming the homogeneous product set again, albeit with some technical differences (e.g. the decision space might only be a subset of the
full product). Although, in principle, the existence of a preference relation on the set of state--prizes is not unrealistic, it is not clear
whether this data is observable (contrary to the preferences on acts which are supposed to be always observable). Without such a relation
the additive value model (think SD-EU) does not allow us to disentangle probabilities and utilities at all (see discussion in the previous
section and earlier). The other question is whether this gives any real methodological advantage compared to using a union of state--prizes
on every dimension and proceeding as normal. A detailed discussion of this question is given in \cite{karni2016expected} and references
therein, and we do not pursue it further here. Finally, we would like to mention that the problem of state-dependence in rank-dependent
models is not well developed -- the only  paper known to the author being \cite{hong1996comonotonic}, where the authors comment on the
meaninglessness of state-dependency in the normal CEU framework, again due to the confounding issues: ``with preferences over acts as the only
empirical primitive, the factorization $\nu(A)u_A(\cdot)$ becomes meaningless. Only the product $W(x,A) = \nu(A)u_A(x)$ can be derived from
preferences''.

However, the general axiomatization of the Choquet integral presented in this thesis, is the first (to the author's best knowledge) result
where state-dependence can be derived \emph{exclusively} from the preferences over acts. This constitutes a significant difference with all
earlier results. As a side result, it is easy to show that comonotonicity-based conditions actually \emph{imply} state-independence of
preferences.

\begin{lemma}
  Let $X = Y^n$. Let conditions of the Theorem \ref{theo:c3-repr} hold. If for all $x \in X$ we have $i \E^x j$ whenever $x_i = x_j$, the
  representation is state-independent.
\end{lemma}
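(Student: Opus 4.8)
The plan is to translate the hypothesis, which is phrased in terms of the preference-derived relation $\E^x$, into a statement about the value functions, and then to use the homogeneity $X = Y^n$ to force all value functions to coincide. Since the conditions of Theorem~\ref{theo:c3-repr} hold, $\cgeq$ admits the Choquet integral representation \eqref{eq:c3-repr} with value functions $f_1,\ldots,f_n$ and capacity $\nu$. Under this representation the coordinate-ranking relation $\R^x$ orders the indices by their transformed values at $x$, and in particular its indifference part satisfies $i \E^x j \Rightarrow f_i(x_i) = f_j(x_j)$. State-independence amounts to exhibiting a single value function $f$ with $f_i = f$ for every $i \in N$, so that the representation can be written as $C(\nu, f(x))$ with a common $f$ applied coordinatewise.

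First I would fix an arbitrary $y \in Y$ and consider the constant act $x = (y,\ldots,y) \in X$, which exists precisely because the product set is homogeneous. For this $x$ we have $x_i = x_j$ for every pair $i,j \in N$, so the hypothesis yields $i \E^x j$ for all $i,j$, and hence $f_i(y) = f_j(y)$ for all $i,j$. As $y$ was arbitrary, this gives $f_1(z) = \cdots = f_n(z)$ for every $z \in Y$; writing $f$ for the common function, the representation collapses to $x \cgeq x' \iff C(\nu, f(x)) \geq C(\nu, f(x'))$, which is the state-independent form. Note that the capacity $\nu$ need not be symmetric in any way — state-independence here concerns only the utility component, with beliefs over states still encoded by a possibly non-additive $\nu$.

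The main obstacle is the first step: justifying the implication $i \E^x j \Rightarrow f_i(x_i) = f_j(x_j)$ for the representation delivered by Theorem~\ref{theo:c3-repr}. This requires recalling how $\E^x$ is constructed from $\cgeq$ in the framework of \citep{timonin2016axiomatization} — as the indifference part of the ranking relation $\R^x$ — and checking that the value functions produced there are compatible with it, so that coordinates ranked equally at $x$ receive equal transformed values. Once this compatibility is established the rest is immediate; in particular I would only need the one direction $\E^x \Rightarrow$ equal transformed values (incompleteness of $\R^x$ is harmless), and I would not invoke the uniqueness results of Theorem~\ref{theo:c3-uniqueness}, since existence of a single representation consistent with $\E^x$ already suffices.
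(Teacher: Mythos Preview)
Your proposal is correct and matches the paper's own proof almost exactly: both argue that the hypothesis forces $f_i(x_i) = f_j(x_j)$ whenever $x_i = x_j$ (the paper attributes this directly to ``the construction''), so a single utility function $U:Y \to \mathbb{R}$ can replace all the $f_i$. The only cosmetic difference is that the paper first rephrases the hypothesis as saying additive representations exist on comonotonic subsets, whereas you work directly with constant acts --- but the substance is identical.
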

\begin{proof}
  Saying that $i \E j$ whenever $x_i = x_j$ in our framework amounts to saying that additive representations exist on the comonotonic
  subsets of $X$. The construction implies that $f_i(x_i) = f_j(x_j)$ whenever $x_i = x_j$. This holds for all $i,j \in X$, hence we
  can use a single utility function $U:Y \rightarrow \mathbb{R}$ for all dimensions. This constitutes state-independency.
\end{proof}

Hence, parting with the assumption that the borders between additive regions actually coincide with the borders between \emph{comonotonic}
sets, allows us to introduce state-dependency into the model and to do so solely by observing the preferences between acts. The resulting
state-dependent utility functions could be used to derive the relation on the set of state--prizes which is assumed as given in earlier
works. Note that, as previously, the meaningfulness of this relation is conditional on the violation of pairwise separability in the model,
as explained in Section \ref{sec:c3-uniqueness}. In other words, the relation might not exist between prizes of certain state pairs.

\subsection{Social choice}
\label{sec:interpr-soci-choice}

If we think of the set $N$ as of a society with $n$ agents, then $X$ is the set of all possible welfare distributions. Moreover, contrary to
the classical scenario, agents could be receiving completely different goods, for example $X_1$ might correspond to healthcare options,
whereas $X_2$ to various educational possibilities. In this case it is not a trivial task to build a correspondence between different
options across agents. Our result basically states that provided the preferences of the social planner abide by the axioms given in Section
\ref{sec:axioms-definitions}, the decisions are made as if the social planner has associated cardinal utilities with the outcomes of each
agent which are \textit{unit} and \textit{level} comparable (cardinal fully comparable or CFC in terms of
\citet{roberts1980interpersonal}). Such approach is not conventional in social choice problems, where the global (social) ordering is
usually not considered as given (there are, however some papers taking this route, e.g. \citealp{ben1997measurement}). Instead, the conditions
are normally given on individual utility functions and the ``aggregating'' functional that is used to derive the global ordering. However,
one of the important questions in social choice literature is that of the interpersonal utility comparability and whether it is justifiable
to assume it or not (e.g. \citealp{harsanyi1980cardinal}). Our results show that if the global ordering of alternatives made by the society (or the social planner) satisfy
certain conditions, it is in principle possible to have individual preferences represented by utility functions that are not only unit but
also level comparable with each other.


\section{Summary}
\label{sec:c4-summary}

We have presented extensions of our characterization for the ordinal and cardinal special cases of the Choquet integral. The ordinal models
are the well-known MIN/MAX and the order statistic, and also their generalization -- the lattice polynomial. We have shown how these can be
characterized in our framework and also related our results to the previously known axiomatizations. On the cardinal side of things, we have
shown how it is possible to characterize the Choquet integral with respect to a convex capacity. The axiom is similar to the tradeoff
consistency condition and is the first characterization of convex models which can deal with both cardinal and ordinal cases (or a mixture
of the two).

Next, we  discussed various aspects of the Choquet integral learning. Traditionally, the learning of the integral was confined to
capacity learning only. However, this approach suffers from serious methodological difficulties. Namely, it requires a very strong
assumption that all criteria are measured on the same scale. We looked at how various preferential information could be used in the capacity
identification problem and analysed why the process of capacity identification is relatively computationally effective. In contrast,
learning the capacity and the value functions together seems to be computationally very hard. There have been only a few attempts at solving it
in the literature, all of them offering only some heuristic methods. Finally, we look at the problem of confounding of the value functions
and the capacity. Our characterization results state that a unique decoupling of the capacity and the value functions is possible only  when the dimensions of the decision space exhibit sufficient pairwise interaction. This has a profound impact on the learning
properties of the Choquet integral, since it guarantees that it is impossible to obtain a unique capacity if the variables are not
interacting enough, no matter how much data we have. This means that the usage of the well-known indices such as the Shapley index is
limited. An alternative option is to use the ``sum of Choquet'' representation \eqref{eq:add-decomposition}, whereby the indices become
meaningful within each interaction group.

Finally, we have looked at various interpretations of our results and their applications in decision theory. We started with MCDA, which was
the main inspiration for our research. Our axiomatization is a long-missing result in this area and we hope that it will help  promote
further theoretical research of the Choquet integral in MCDA. The characterization leads to construction of a unique mapping between
elements of various criteria sets (dimensions of the decision space). This has interesting connections to the question of cross-modality
mapping, which has been extensively studied in psychology since the 1950s. Finally, we discussed two other areas where our results can be
applied -- the social choice theory and the state-dependent DUU. The latter is especially interesting, as our characterization is the first
to construct a meaningful state-dependent model based solely on the preferences among acts. Previous works introduced additional
preference relations into the model, in particular the relation on the set of ``state--prizes''. Conceptually, this amounts to saying that
elements of various dimensions are commensurate which does not always have to be the case. Observability of this preference relation is also
not apparent. Our results do not require any additional constructs apart from the preference between acts themselves. Yet, we are able to
construct a unique mapping between the outcomes in different states (provided the data exhibits sufficient interaction).


\bibliographystyle{abbrvnat}
\bibliography{cite_lib}

\end{document}